\documentclass[11pt]{article}

%*******************AMS-latex*************
\usepackage{amssymb} % set of additional math symbols
\usepackage{amsmath}     % new environments
\usepackage{amsthm}
%*********************************************** 
\usepackage{todonotes}
\usepackage{mathtools}

%\usepackage[ruled,vlined,linesnumbered]{algorithm2e}
%\SetKw{KwNot}{not}
%\SetKw{KwOR}{or}
%\SetKw{KwExitFor}{exit for-loop}

\usepackage{a4wide}
\usepackage{graphicx}
\usepackage{hyperref}

\newtheorem{thm}{Theorem}
\newtheorem{lem}{Lemma}

\newtheorem{cla}{Claim}

%%%%%%%%%%%%%%%%%%%%%%%%%%%%%%%%%%%%%%%%%%%%

\newenvironment{keyword}{\par{\noindent\bf Keywords:}}

\begin{document}

%*******TITLE AND AUTHORS*******************************************
\title{Recoverable robust spanning tree problem under interval uncertainty representations} 

\author{Mikita Hradovich$^\dag$, Adam Kasperski$^\ddag$,  Pawe{\l} Zieli{\'n}ski$^\dag$\\
          {\small \textit{$^\dag$Faculty of Fundamental Problems of Technology,}}\\
	{\small \textit{Wroc{\l}aw University of Technology,  Wroc{\l}aw, Poland}}\\
	 {\small \textit{$^\ddag$Faculty of Computer Science and Management,}}\\
	{\small \textit{Wroc{\l}aw University of Technology,  Wroc{\l}aw, Poland}}\\
	{\small \texttt{\{mikita.hradovich,adam.kasperski,pawel.zielinski\}@pwr.edu.pl}}
}

 \date{}
    
\maketitle

\begin{abstract}
This paper  deals  with the recoverable robust spanning tree problem
under interval uncertainty representations.
A strongly polynomial time, combinatorial algorithm for the recoverable spanning tree problem is first constructed.  
This problem generalizes the incremental spanning tree problem, previously discussed in literature.
The algorithm built  is then applied  to solve  
 the recoverable robust spanning tree problem, under the traditional 
  interval uncertainty representation, in polynomial time.
 Moreover,  the algorithm allows to obtain
 several approximation results for the recoverable robust spanning tree problem 
 under the Bertsimas and Sim interval uncertainty representation and the
 interval uncertainty representation with a budget constraint.
 \end{abstract}

\begin{keyword}
robust optimization; interval data; recovery; spanning tree
\end{keyword}

\section{Introduction}

Let $G=(V,E)$, $|V|=n$, $|E|=m$, be an undirected graph and let $\Phi$ be the set of all spanning trees of~$G$. In the \emph{minimum spanning tree problem},  a cost is specified for each edge, and we seek a spanning tree in $G$ of the minimum total cost. This problem is well known and can be solved efficiently by using several polynomial time algorithms (see, e.g.,~\cite{AMO93, PS98}). 

In this paper, we first study the
\emph{recoverable spanning tree problem}  (\textsc{Rec ST} for short). 
Namely,
for each edge $e\in E$, we are given a \emph{first stage cost}~$C_e$ and a \emph{second stage cost}~$c_e$
(recovery stage cost).
 Given a spanning tree $X\in \Phi$, let $\Phi_X^k$ be the set of all spanning trees $Y\in \Phi$ such that $|Y\setminus X|\leq k$ (the recovery set), where~$k$  is a fixed integer in $[0,n-1]$,
 called the \emph{recovery parameter}.
  Note that $\Phi_X^k$ 
 can be seen as   a neighborhood of~$X$ containing all spanning trees which can be obtained from $X$ by exchanging up to~$k$ edges. The \textsc{Rec ST} problem can be stated formally as 
 follows:
\begin{align}
\textsc{Rec ST}: \; \min_{X\in\Phi}
\left (\sum_{e\in X} C_e  +    \min_{Y\in \Phi^{k}_{X}}\sum_{e\in Y}c_e \right).
\label{recst}
\end{align}
We thus seek a first stage spanning tree $X\in \Phi$ and a second stage spanning tree $Y\in \Phi^k_X$, so that the total cost of $X$ and $Y$ for $C_e$ and $c_e$, respectively, is minimum. 
Notice that \textsc{Rec ST} generalizes the  following \emph{incremental spanning tree problem}, investigated in~\cite{SAO09}:
\begin{align}
\textsc{Inc ST}:  \min_{Y\in \Phi^{k}_{\widehat{X}}}\sum_{e\in Y}c_e,
\label{incst}
\end{align}
where $\widehat{X}\in\Phi$ is a given spanning tree.
So, we  wish to find an  improved spanning tree~$Y$ with the minimum cost, 
within a neighborhood of~$\widehat{X}$  determined by $\Phi^{k}_{\widehat{X}}$.
Several interesting practical applications of the incremental network optimization were presented in~\cite{SAO09}.
It is worth pointing out  that \textsc{Inc ST} can be seen as the \textsc{Rec ST} problem with a fixed first stage spanning tree~$\widehat{X}$, whereas in \textsc{Rec ST} both the first and the second stage trees are unknown. It has been shown in~\cite{SAO09} that \textsc{Inc ST} can be solved in strongly polynomial time by applying the Lagrangian relaxation technique. On the other hand, no strongly polynomial time combinatorial algorithm for \textsc{Rec ST} has been known to date.
Thus proposing such an algorithm for this problem is one of the main results of this paper.

The \textsc{Rec ST}  problem, beside being an interesting problem \emph{per se}, has
 an important connection with  a more general problem.
Namely, it is 
 an inner  problem in 
the \emph{recoverable robust model} with uncertain recovery costs, discussed in~\cite{B11,B12, BKK11, CG15b,LLMS09,NO13}.
Indeed, the recoverable spanning tree problem can be generalized by considering its robust version.
 Suppose that the second stage costs $c_e$, $e\in E$, are uncertain and let $\mathcal{U}$ contain all possible realizations of the second stage costs, called \emph{scenarios}. We will denote by $c_e^S$  the second stage cost of edge $e\in E$ under scenario $S\in \mathcal{U}$,  where $S=(c^{S}_e)_{e\in E}$ is a cost vector.
 In  the \emph{recoverable robust spanning tree} problem (\textsc{Rob Rec ST} for short),
 we choose an initial spanning tree $X$ in the first stage, with the cost equal to $\sum_{e\in X} C_e$. Then, after scenario $S\in\mathcal{U}$ reveals,  $X$ can be modified by exchanging at most~$k$ edges, obtaining  
 a new spanning tree $Y\in \Phi^k_{X}$. The second stage cost of $Y$ under scenario $S\in \mathcal{U}$ is 
 equal to $\sum_{e\in Y} c_e^S$. Our goal is to find a pair of trees $X$ and $Y$ such that $|X\setminus Y|\leq k$, which minimizes the sum of the first and the second stage costs $\sum_{e\in X} C_e +\sum_{e\in Y} c_e^S$ in the worst case.
The  \textsc{Rob Rec ST} problem is defined formally as follows:
\begin{equation}
\textsc{Rob Rec ST}: \; \min_{X\in \Phi} \left(\sum_{e\in X} C_e + \max_{S\in \mathcal{U}}\min_{Y\in \Phi^k_{X}} \sum_{e\in Y} c_e^S\right).
\label{rrp}
\end{equation}

If $C_e=0$ for each $e\in E$ and $k=0$, then \textsc{Rob Rec ST} is equivalent to the following \emph{min-max spanning tree} problem, examined in~\cite{ABV07,KY97,KZ11},
in which  we seek a spanning tree that
minimizes the largest cost over all scenarios:
\begin{equation}
\textsc{Min-Max ST}: \; \min_{X\in \Phi}  \max_{S\in \mathcal{U}} \sum_{e\in X} c_e^S .
\label{mrp}
\end{equation}

 If $C_e=0$ for each $e\in E$ and $k= n-1$, then \textsc{Rob Rec ST} becomes the following \emph{ 
 adversarial problem}~\cite{NO13} in which an  \emph{adversary} wants to find a scenario
 which  leads to the greatest increase in  the cost of the minimum spanning tree:
\begin{equation}
\textsc{Adv ST}: \; \max_{S\in \mathcal{U}}\min_{Y\in \Phi } \sum_{e\in Y} c_e^S.
\label{rap}
\end{equation}

We now briefly recall the known complexity results on \textsc{Rob Rec ST}. 
It turns out that 
its computational
complexity highly relies on the
 way of defining the scenario set~$\mathcal{U}$.
There are  two popular methods of representing~$\mathcal{U}$, namely the
\emph{discrete and interval uncertainty representations}.
For the \emph{discrete uncertainty representation} (see, e.g.,~\cite{KY97}), scenario set, denoted by~$\mathcal{U}^D$, contains $K$ explicitly listed scenarios, i.e. $\mathcal{U}^D=\{S_1,S_2,\dots,S_K\}$. In this case,
the \textsc{Rob Rec ST} problem is known to be NP-hard for $K=2$ and any constant $k$~\cite{KKZ14}. 
Furthermore, it becomes strongly NP-hard and not at all approximable when both $K$ and $k$ are  
a part of  the input~\cite{KKZ14}.   It is worthwhile to mention that \textsc{Min-Max ST} 
is NP hard even when $K=2$ and becomes strongly NP-hard and
not approximable within 
 $O(\log^{1-\epsilon} n)$ for any $\epsilon>0$ unless NP 
  $\subseteq$ DTIME$(n^{\mathrm{poly} \log n})$,
when $K$ is a part of input~\cite{KY97, KZ11}. 
It admits an FPTAS, when $K$ is a constant~\cite{ABV07} and is
 approximable within
$O(\log^2 n)$, when $K$ is a part of the input~\cite{KZ11}.
 The \textsc{Adv ST} problem, under scenario set $\mathcal{U}^D$, is  polynomially solvable, since it 
 boils down to
 solving $K$ traditional minimum spanning tree problems.

For  the \emph{interval uncertainty representation},  which is considered  in this paper,
one assumes that the second stage cost of each edge $e\in E$ is known to belong to the closed 
interval $[c_e, c_e+d_e]$, 
where $c_e$ is a \emph{nominal cost} of $e\in E$ and $d_e\geq 0$ is the maximum 
deviation of the cost of~$e$ from its nominal value.
In the traditional case~$\mathcal{U}$,
denoted by $\mathcal{U}^I$, is  the Cartesian product of  all these intervals~\cite{KY97},
i.e. 
\begin{equation}
\mathcal{U}^I=\{S=(c_e^S)_{e\in E}\,:\, c_e^S\in [c_e, c_e+d_e], e\in E\}.
\label{intset}
\end{equation}

In~\cite{B11} a polynomial algorithm for the \emph{recoverable robust matroid basis} problem
under scenario set~$\mathcal{U}^I$ was constructed, provided that the recovery parameter~$k$ is constant.
In consequence, \textsc{Rob Rec ST} under $\mathcal{U}^I$ is also polynomially solvable for constant $k$.
Unfortunately,  the algorithm proposed in~\cite{B11} is exponential in $k$. 
 Interestingly, the corresponding recoverable robust version of the shortest path problem 
 ($\Phi$ is replaced with the set of all $s-t$ paths in $G$) has been proven to be
  strongly NP-hard and not at all approximable even if $k=2$~\cite{B12}.
It has been recently shown in~\cite{HKZ16} that \textsc{Rob Rec ST} under $\mathcal{U}^I$ is polynomially solvable 
when~$k$ is a part of the input. In order to prove this result, 
 a technique called the \emph{iterative relaxation}
 of a linear programming formulation, whose framework was described in~\cite{LRS11}, has been applied
 This technique, however, does not imply directly a strongly polynomial algorithm for \textsc{Rob Rec ST}, since it requires the solution of a linear program. 
 
In~\cite{BS03}  a popular and commonly used modification of the scenario set~$\mathcal{U}^I$ 
has been proposed.  The new  scenario set, denoted as $\mathcal{U}^{I}_1(\Gamma)$, is 
a subset of $\mathcal{U}^I$
such that under each scenario in $\mathcal{U}^{I}_1(\Gamma)$, the costs of at most $\Gamma$ 
edges are greater than their nominal values $c_e$, where $\Gamma$ is 
assumed to be a fixed integer in $[0,m]$. Scenario set $\mathcal{U}^{I}_1(\Gamma)$ is formally defined as follows:
\begin{equation}
\mathcal{U}^{I}_1(\Gamma)=\{S=(c_e^S)_{e\in E}\,:\, c_e^S\in [c_e, c_e+\delta_ed_e], \delta_e\in \{0,1\},
e\in E, \sum_{e\in E} \delta_e\leq \Gamma\}.
\label{intset1}
\end{equation}
The parameter $\Gamma$ allows us to model the degree of uncertainty. 
 When $\Gamma=0$, then we get \textsc{Rec ST} (\textsc{Rob Rec ST} with one scenario~$S=(c_e)_{e\in E}$).
 On the other hand, when $\Gamma=m$, then we get 
 \textsc{Rob Rec ST} under
 the traditional interval uncertainty~$\mathcal{U}^I$.
 It turns out that the \textsc{Adv ST} problem under $\mathcal{U}^{I}_1(\Gamma)$  is strongly NP-hard
 (it is equivalent  to the problem of finding $\Gamma$ most vital edges)~\cite{NO13, LC93, FR99}. Consequently, the  more general \textsc{Rob Rec ST} problem is also strongly NP-hard. Interestingly, the corresponding \textsc{Min-Max ST} problem
 with~$\mathcal{U}^{I}_1(\Gamma)$ is polynomially solvable~\cite{BS03}.
 
 Yet another   interesting way of defining scenario set, which
 allows us to control the amount of uncertainty,
 is called the scenario set with a \emph{budget constraint} (see, e.g,.~\cite{NO13}). This scenario set, denoted as $\mathcal{U}^{I}_2(\Gamma)$, is defined as follows:
\begin{equation}
\mathcal{U}^{I}_2(\Gamma)=\{S=(c_e^S)_{e\in E}\,:\, c_e^S=c_e+\delta_e, \delta_e\in [0,d_e], e\in E, \sum_{e\in E} \delta_e\leq \Gamma \},
\label{intset2}
\end{equation}
where  $\Gamma\geq 0$ is a 
a fixed parameter that can be seen as a budget of an adversary, and represents the maximum total increase of the edge costs from their nominal values.
Obviously, 
 if $\Gamma$ is sufficiently large,  then $\mathcal{U}^{I}_2(\Gamma)$ reduces to the traditional interval uncertainty representation~ $\mathcal{U}^{I}$.
The computational complexity of \textsc{Rob Rec ST} for scenario set $\mathcal{U}^I_2$ is still open. We only know that its special cases, namely \textsc{Min-Max ST} and \textsc{Adv ST}, are polynomially solvable~\cite{NO13}.

In this paper we will construct a  combinatorial algorithm for   \textsc{Rec ST} with strongly polynomial running time.
We will apply  this  algorithm 
for solving   \textsc{Rob Rec ST} under scenario set~$\mathcal{U}^I$ in strongly polynomial time.
Moreover, we will show how the algorithm for  \textsc{Rec ST} can be used to
obtain several approximation results for \textsc{Rob Rec ST}, under scenario sets 
$\mathcal{U}^{I}_1(\Gamma)$ and $\mathcal{U}^{I}_2(\Gamma)$. 
This paper is organized as follows. Section~\ref{sec1} contains the main result of this paper -- a  combinatorial algorithm for \textsc{Rec ST} with strongly polynomial running time.
Section~\ref{sec2} discusses 
 \textsc{Rob Rec ST} under the interval uncertainty representations
$\mathcal{U}^{I}$,
$\mathcal{U}^{I}_1(\Gamma)$, and $\mathcal{U}^{I}_2(\Gamma)$.

\section{The recoverable spanning tree problem}
\label{sec1}

In this section we construct a combinatorial algorithm for \textsc{Rec ST} with strongly polynomial running time. Since $|X|=n-1$ for each $X\in \Phi$, \textsc{Rec ST} (see (\ref{recst})) is equivalent  to the following mathematical programming problem: 
\begin{equation}
 \begin{array}{ll}
 \min & \displaystyle  \sum_{e\in X}C_e + \sum_{e\in Y}c_e\\
\text{s.t. } & |X \cap Y| \geq L, \\
       & X,Y\in \Phi,
 \end{array}
 \label{mrecst}
\end{equation}
where $L= n-1-k$.
Problem~(\ref{mrecst}) can be expressed as the following MIP model:
\begin{eqnarray}
    Opt=\min        \sum_{e\in E}C_e x_e + \sum_{e\in E}c_e y_e&& \label{mipst_o}\\
    \text{s.t. }\;\;  \sum_{e\in E} x_e = n-1, && \label{mipst_1} \\
                      \sum_{e\in E(U)} x_e \leq  |U|-1, && \forall U\subset V, \label{mipst_2}\\
                       \sum_{e\in E} y_e = n-1, &&   \label{mipst_3}\\
                      \sum_{e\in E(U)} y_e \leq  |U|-1, && \forall U\subset V,  \label{mipst_4}\\
                      x_e-z_e \geq  0,                              &&\forall e\in E,  \label{mipst_5}\\
                       y_e-z_e  \geq  0,                              &&\forall e\in E,  \label{mipst_6}\\
                         \sum_{e\in E} z_e\geq  L, &&  \label{mipst_7}\\
                      x_e,y_e, z_e \geq 0, \text{ integer}&& \forall e\in E,  \label{mipst_8}
 \end{eqnarray}
where $E(U)$ stands for the set of edges that have both endpoints in  $U\subseteq V$. We first apply the Lagrangian relaxation (see, e.g., \cite{AMO93})
to   (\ref{mipst_o})-(\ref{mipst_8}) by relaxing the cardinality constraint~(\ref{mipst_7}) 
with a nonnegative multiplier~$\theta$. We also relax
the integrality constraints~(\ref{mipst_8}). We thus get the following linear program (with the corresponding
dual variables which will be used later):
\begin{align}
    \phi(\theta)=\min  & \sum_{e\in E}C_e x_e +\sum_{e\in E}c_e y_e- \theta\sum_{e\in E} z_e+\theta L&&& \label{llpst_o}\\
    \text{s.t. } &\sum_{e\in E} x_e = n-1,  & && [\mu], \nonumber\\
                      &-\sum_{e\in E(U)} x_e \geq  -(|U|-1),  &\forall U\subset V,& &[w_U] ,\nonumber\\
                       &\sum_{e\in E} y_e = n-1, &  &&[\nu], \nonumber\\
                      &-\sum_{e\in E(U)} y_e \geq  -(|U|-1), & \forall U\subset V, &&[v_U],  \nonumber\\
                      &x_e-z_e \geq  0,                              &\forall e\in E,  &&[\alpha_e],\nonumber\\
                       &y_e-z_e  \geq  0,                              &\forall e\in E,  &&[\beta_e],\nonumber\\
                      &x_e,y_e, z_e \geq 0, & \forall e\in E.  &&\nonumber
 \end{align}
 
For any $\theta\geq 0$, the Lagrangian function~$\phi(\theta)$ is a lower bound on~$Opt$.
It is well-known that $\phi(\theta)$ is concave and piecewise linear. 
By  the optimality test (see, e.g., \cite{AMO93}), we obtain the following theorem:
\begin{thm}
Let $(x_e, y_e, z_e)_{e\in E}$ be an optimal  solution to~(\ref{llpst_o})
for some $\theta\geq 0$, feasible to~(\ref{mipst_1})-(\ref{mipst_8}) 
 and satisfying the complementary slackness condition 
 $\theta(\sum_{e\in E} z_e-L)=0$. Then $(x_e, y_e, z_e)_{e\in E}$ is optimal to~(\ref{mipst_o})-(\ref{mipst_8}).
 \label{tloc}
\end{thm}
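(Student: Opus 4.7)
The plan is to invoke the standard weak-duality argument for Lagrangian relaxation, carefully verified in the mixed-integer setting at hand. Let $(\bar{x}_e,\bar{y}_e,\bar{z}_e)_{e\in E}$ denote the solution in the hypothesis, and let its objective value in~(\ref{mipst_o}) be $Z^{*}$. I would like to conclude $Z^{*}=Opt$ by sandwiching: feasibility of $(\bar{x},\bar{y},\bar{z})$ for~(\ref{mipst_1})–(\ref{mipst_8}) gives $Z^{*}\geq Opt$ immediately, so the whole content is to prove the reverse inequality $Z^{*}\leq Opt$.

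First I would establish weak duality, i.e.\ $\phi(\theta)\leq Opt$ for every $\theta\geq 0$. The key observation is that~(\ref{llpst_o}) is obtained from~(\ref{mipst_o})–(\ref{mipst_8}) by dropping the integrality constraints~(\ref{mipst_8}) and moving the cardinality constraint~(\ref{mipst_7}) into the objective with multiplier~$\theta\geq 0$. Consequently, for any solution $(x,y,z)$ feasible to~(\ref{mipst_o})–(\ref{mipst_8}) one has $\sum_{e}z_e\geq L$, so
\begin{equation*}
\sum_{e\in E}C_e x_e+\sum_{e\in E}c_e y_e \;\geq\; \sum_{e\in E}C_e x_e+\sum_{e\in E}c_e y_e-\theta\Bigl(\sum_{e\in E}z_e-L\Bigr),
\end{equation*}
and the right-hand side is at least $\phi(\theta)$ since $(x,y,z)$ is feasible to~(\ref{llpst_o}) as well. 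Taking the minimum over such $(x,y,z)$ yields $\phi(\theta)\leq Opt$.

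Next I would compute the Lagrangian value attained by $(\bar{x},\bar{y},\bar{z})$ and use the complementary slackness hypothesis $\theta\bigl(\sum_{e}\bar{z}_e-L\bigr)=0$ to kill the penalty term, giving
\begin{equation*}
\phi(\theta)\;=\;\sum_{e\in E}C_e\bar{x}_e+\sum_{e\in E}c_e\bar{y}_e-\theta\sum_{e\in E}\bar{z}_e+\theta L\;=\;\sum_{e\in E}C_e\bar{x}_e+\sum_{e\in E}c_e\bar{y}_e\;=\;Z^{*}.
\end{equation*}
Combining with weak duality gives $Z^{*}=\phi(\theta)\leq Opt$, and together with $Z^{*}\geq Opt$ from feasibility this forces $Z^{*}=Opt$, proving optimality.

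There is essentially no hard step here: the proof is a direct verification once one lines up the three hypotheses correctly. The only point that deserves attention is that the Lagrangian~(\ref{llpst_o}) is a relaxation in two simultaneous senses (dualising the knapsack-type constraint~(\ref{mipst_7}) and dropping integrality), so the weak-duality chain has to use feasibility of the original integer solution for \emph{both} relaxed constraint systems; writing it out as above makes this transparent.
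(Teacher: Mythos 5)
Your argument is correct and is exactly the standard Lagrangian-relaxation optimality test that the paper invokes by reference to Ahuja, Magnanti and Orlin rather than proving: weak duality $\phi(\theta)\leq Opt$ from feasibility plus the nonpositive penalty term, then complementary slackness to show the given solution attains $\phi(\theta)$ with its original objective value. Nothing differs in substance from the paper's (cited) proof, and all three hypotheses are used where they should be.
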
 
Let $(X,Y)$, $X,Y\in \Phi$, be a pair of spanning trees of $G$ (a \emph{pair} for short). This pair corresponds to a feasible $0-1$ solution to~(\ref{llpst_o}), defined as follows: $x_e=1$ for $e\in X$, $y_e=1$ for $e\in Y$, and $z_e=1$ for $e\in X\cap Y$; the values of the remaining variables are set to~0. From now on, by a pair $(X,Y)$ we also mean a feasible solution to~(\ref{llpst_o}) defined as above.
Given a pair $(X,Y)$ with the corresponding solution $(x_e, y_e, z_e)_{e\in E}$, let us define the partition $(E_X, E_Y, E_Z, E_W)$ of the set of the edges $E$ in the following way: $E_X=\{e\in E\,:\, x_e=1, y_e=0\}$, $E_Y=\{e\in E\,:\, y_e=1, x_e=0\}$, $E_Z=\{e\in E \,:\, x_e=1, y_e=1\}$ 
and $E_W=\{e\in E\,:\, x_e=0,y_e=0\}$. Thus 
equalities: $X=E_X\cup E_Z$, $Y=E_Y\cup E_Z$ and $E_Z=X\cap Y$ hold. 
Our goal is to establish some sufficient optimality conditions for a given pair $(X,Y)$ in the problem~(\ref{llpst_o}).
The dual to~(\ref{llpst_o}) has the following form:
\begin{align}
    \phi^D(\theta)=\max  &-\sum_{U\subset V}(|U|-1)w_U+(n-1)\mu-\sum_{U\subset V}(|U|-1)v_U+(n-1)\nu
                                     +\theta L&&\label{dllpst_o}\\
    \text{s.t. } &-\sum_{\{U\subset V\,:\,e\in E(U)\}} w_U+\mu \leq C_e-\alpha_e,  & \forall e\in E,&\nonumber \\
                    &-\sum_{\{U\subset V\,:\,e\in E(U)\}} v_U+\nu \leq c_e-\beta_e,   &\forall e\in E,&\nonumber\\
                    &\alpha_e+\beta_e\geq \theta, &\forall e\in E,&\nonumber\\
                    &w_U,v_U\geq0, &U\subset V,&\nonumber\\
                    &\alpha_e,\beta_e\geq0, &\forall e\in E.&\nonumber
 \end{align}

\begin{lem}
\label{lemdu}
The dual problem~(\ref{dllpst_o}) can be rewritten as follows:
\begin{equation}
\label{eqdual}
\phi^D(\theta)=\max_{\{\alpha_e\geq 0, \beta_e \geq 0\,:\, \alpha_e+\beta_e\geq \theta,\; e\in E\}} \left( \min_{X\in \Phi} \sum_{e\in X} (C_e-\alpha_e)+ \min_{Y\in \Phi} \sum_{e\in Y} (c_e-\beta_e)\right)+\theta L.
\end{equation}
\end{lem}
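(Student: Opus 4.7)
The approach is to observe that, once $\alpha_e$ and $\beta_e$ are fixed, the dual~(\ref{dllpst_o}) decomposes into two independent linear programs: one over $(w_U, \mu)$ and one over $(v_U, \nu)$. Each of these is precisely the LP dual of a minimum spanning tree LP relaxation in the Edmonds formulation, and invoking the integrality of the spanning tree polytope identifies its optimal value with a minimum over $\Phi$.

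Concretely, first I would pull the outer maximum over the $\alpha$'s and $\beta$'s outside, rewriting $\phi^D(\theta)$ as
\[
\max_{\substack{\alpha_e,\beta_e\geq 0\\ \alpha_e+\beta_e\geq\theta}} \Bigl[ M_1(\alpha) + M_2(\beta) \Bigr] + \theta L,
\]
where $M_1(\alpha)$ is the maximum of $(n-1)\mu - \sum_{U\subset V}(|U|-1)w_U$ over $w_U\geq 0$ and the constraint $\mu - \sum_{\{U\,:\,e\in E(U)\}} w_U \leq C_e-\alpha_e$ for every $e\in E$, and $M_2(\beta)$ is the analogous quantity in $(v_U,\nu)$ with $c_e-\beta_e$ in place of $C_e-\alpha_e$. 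This separation is legitimate because the constraints coupling the two blocks have been absorbed into the outer feasible set $\{\alpha_e,\beta_e\geq 0,\ \alpha_e+\beta_e\geq\theta\}$.

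Next I would recognize $M_1(\alpha)$ as the LP dual of
\[
\min \sum_{e\in E}(C_e-\alpha_e)x_e \quad \text{s.t.} \quad \sum_{e\in E}x_e=n-1,\ \sum_{e\in E(U)}x_e\leq |U|-1\ \forall U\subset V,\ x_e\geq 0,
\]
that is, the LP relaxation of the minimum spanning tree problem with costs $C_e-\alpha_e$. By Edmonds' theorem, the spanning tree polytope coincides with this feasible region and has integer vertices corresponding to members of $\Phi$, so the LP optimum equals $\min_{X\in\Phi}\sum_{e\in X}(C_e-\alpha_e)$. Both primal and dual are feasible and bounded (the primal polytope is nonempty and bounded, and the dual admits a feasible point by taking $\mu$ sufficiently negative with $w_U=0$), so strong LP duality gives $M_1(\alpha)=\min_{X\in\Phi}\sum_{e\in X}(C_e-\alpha_e)$. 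The same argument applied to $(v_U,\nu)$ yields $M_2(\beta)=\min_{Y\in\Phi}\sum_{e\in Y}(c_e-\beta_e)$. Substituting these into the displayed formula above gives~(\ref{eqdual}).

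The only delicate point is justifying the use of strong LP duality together with Edmonds' integrality theorem when the modified costs $C_e-\alpha_e$ or $c_e-\beta_e$ may be negative; this causes no difficulty because integrality of the spanning tree polytope does not depend on the sign of the objective, and feasibility of the dual can be secured directly as indicated. Everything else is bookkeeping.
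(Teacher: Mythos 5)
Your proof is correct and follows essentially the same route as the paper: fix $\alpha_e,\beta_e$, dualize the remaining block-decomposable LP in $(w_U,\mu)$ and $(v_U,\nu)$, and identify each block with a minimum spanning tree problem via the integrality of the spanning tree polytope. You merely spell out the details (the decomposition, Edmonds' integrality, strong duality, and the harmlessness of negative reduced costs) that the paper's one-line proof leaves implicit.
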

\begin{proof}
         Fix some  $\alpha_e$ and $\beta_e$ such that $\alpha_e+\beta_e\geq \theta$ for each $e\in E$
	 in~(\ref{dllpst_o}).
	For these constant values of $\alpha_e$ and $\beta_e$, $e\in E$,
	using the dual to~(\ref{dllpst_o}), we arrive to 
         $\min_{X\in \Phi} \sum_{e\in X} (C_e-\alpha_e)+ \min_{Y\in \Phi} \sum_{e\in Y} (c_e-\beta_e)+\theta L$ and 
         the lemma    follows.
\end{proof}
Lemma~\ref{lemdu} allows us to establish the following result:
\begin{thm}[\textbf{Sufficient pair optimality conditions}]
\label{tssoc}
 A pair  $(X,Y)$  is optimal  
to~(\ref{llpst_o}) for a fixed  $\theta\geq 0$
if there exist $\alpha_e\geq 0$, $\beta_e \geq 0$ such that
 $\alpha_e+\beta_e= \theta$ for each $e\in E$ and
\begin{itemize}
 \item[(i)] $X$ is a minimum spanning tree for the costs $C_e-\alpha_e$, $Y$ is a minimum spanning tree for the costs $c_e-\beta_e,$
 \item[(ii)] $\alpha_e=0$ for each $e\in E_X$, $\beta_e=0$ for each $e\in E_Y$.
 \end{itemize}
\end{thm}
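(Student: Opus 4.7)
The plan is to establish optimality by exhibiting matching primal and dual objective values via weak LP duality, using the dual reformulation from Lemma~\ref{lemdu}.

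First I would write down explicitly the primal value of the pair $(X,Y)$ viewed as a $0$--$1$ solution to (\ref{llpst_o}). Since $X = E_X \cup E_Z$, $Y = E_Y \cup E_Z$, and $z_e = 1$ exactly on $E_Z = X \cap Y$, this value equals
\[
\sum_{e \in X} C_e + \sum_{e \in Y} c_e - \theta\, |E_Z| + \theta L.
\]

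Next, using the $\alpha_e, \beta_e$ given by the hypothesis, I would evaluate the dual expression in (\ref{eqdual}). These multipliers are dual feasible because $\alpha_e, \beta_e \geq 0$ and $\alpha_e + \beta_e = \theta$. By condition (i), the minima $\min_{X'\in\Phi}\sum_{e\in X'}(C_e-\alpha_e)$ and $\min_{Y'\in\Phi}\sum_{e\in Y'}(c_e-\beta_e)$ are attained at $X$ and $Y$ respectively, so the dual value at $(\alpha,\beta)$ equals
\[
\sum_{e \in X} (C_e - \alpha_e) + \sum_{e \in Y} (c_e - \beta_e) + \theta L.
\]

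The remaining step is the algebraic identity
\[
\sum_{e \in X} \alpha_e + \sum_{e \in Y} \beta_e \;=\; \theta\, |E_Z|,
\]
which makes the primal and dual values coincide. Here condition (ii) does the work: $\alpha_e = 0$ on $E_X$ collapses $\sum_{e\in X}\alpha_e$ to $\sum_{e\in E_Z}\alpha_e$, and symmetrically $\sum_{e\in Y}\beta_e = \sum_{e\in E_Z}\beta_e$. Adding and using $\alpha_e + \beta_e = \theta$ for each $e \in E_Z$ gives $\sum_{e\in E_Z}(\alpha_e+\beta_e) = \theta |E_Z|$, as required. Weak duality for the linear program (\ref{llpst_o}) then forces $(X,Y)$ to be optimal.

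I do not expect a real obstacle: the argument is essentially complementary slackness dressed up via Lemma~\ref{lemdu}. The one subtle point worth emphasizing in the write-up is why the hypothesis enforces $\alpha_e + \beta_e = \theta$ with equality rather than only the dual-feasibility inequality $\alpha_e + \beta_e \geq \theta$; the equality is exactly what is needed so that the telescoping over $E_Z$ in the last identity produces precisely $\theta|E_Z|$, matching the $-\theta\sum_{e\in E}z_e$ term in the primal.
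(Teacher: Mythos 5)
Your proposal is correct and follows essentially the same route as the paper: evaluate the dual expression from Lemma~\ref{lemdu} at the given $(\alpha_e,\beta_e)$, use condition (i) to identify the inner minima with $X$ and $Y$, use condition (ii) together with $\alpha_e+\beta_e=\theta$ to reduce the correction term to $\theta|E_Z|$, and conclude by weak duality. The paper performs the same cancellation by splitting the sums over $E_X$, $E_Y$, $E_Z$, so there is no substantive difference.
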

\begin{proof}
	By the primal-dual relation, the inequality $\phi^D(\theta)\leq \phi(\theta)$ holds. Using~(\ref{eqdual}), we obtain
	\begin{align*}
	\phi^D(\theta)&\geq \sum_{e\in X} (C_e-\alpha_e)+  \sum_{e\in Y} (c_e-\beta_e)+\theta L=
	\sum_{e\in E_X} C_e+\sum_{e\in E_Y}c_e+
	\sum_{e\in E_Z} (C_e+c_e-\theta)+\theta L\\
	&=\sum_{e\in E_X} C_e + \sum_{e\in E_Z} C_e+\sum_{e\in E_Y} c_e +
	\sum_{e\in E_Z} c_e - \theta|E_Z|+\theta L\\
	&=\sum_{e\in X} C_e + \sum_{e\in Y} c_e-\theta|E_Z|+\theta L=\phi(\theta).
	\end{align*}
	The Weak Duality Theorem implies the optimality of $(X,Y)$ in~(\ref{llpst_o}) for a fixed $\theta\geq 0$.
\end{proof}
\begin{lem}
     A  pair $(X,Y)$,
     which satisfies  the  sufficient pair optimality conditions for $\theta=0$,
     can be computed in polynomial 
     time.
    \label{ltheta0}
\end{lem}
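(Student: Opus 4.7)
The plan is to observe that for $\theta=0$ the sufficient pair optimality conditions collapse to computing two independent minimum spanning trees, one for each cost vector, each of which can be found with a classical polynomial-time algorithm.

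First I would analyze what the conditions of Theorem~\ref{tssoc} actually demand when $\theta=0$. Since $\alpha_e, \beta_e \geq 0$ and $\alpha_e+\beta_e = \theta = 0$, both multipliers are forced to vanish: $\alpha_e=\beta_e=0$ for every $e\in E$. Consequently condition (ii) is satisfied vacuously (regardless of the partition $(E_X,E_Y,E_Z,E_W)$), and condition (i) reduces to the requirement that $X$ be a minimum spanning tree with respect to the first-stage costs $C_e$ and that $Y$ be a minimum spanning tree with respect to the second-stage costs $c_e$.

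Next I would exhibit the construction: compute $X$ by running any standard minimum spanning tree algorithm (for example Kruskal's or Prim's) on $G$ with edge weights $C_e$, and independently compute $Y$ by running the same algorithm on $G$ with edge weights $c_e$. Each of these two computations takes polynomial (indeed strongly polynomial) time, so the overall procedure runs in polynomial time. The resulting pair $(X,Y)\in \Phi\times \Phi$ is a valid pair in the sense defined before Theorem~\ref{tssoc}, and with the choice $\alpha_e=\beta_e=0$ for all $e\in E$ it satisfies both conditions (i) and (ii) for $\theta=0$.

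There is essentially no obstacle here: the lemma is really a sanity check that the starting value $\theta=0$ of the Lagrangian parameter admits an easily computable optimal pair, which will serve as the base case for the subsequent iterative procedure that increases $\theta$ in search of a pair fulfilling the complementary slackness condition of Theorem~\ref{tloc}. The only point worth writing out carefully is the reduction of condition (i) to the two decoupled minimum spanning tree problems once the multipliers $\alpha_e,\beta_e$ are fixed to zero.
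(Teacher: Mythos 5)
Your proposal is correct and follows exactly the same route as the paper's proof: set $\alpha_e=\beta_e=0$ for all $e\in E$ and take $X$, $Y$ to be minimum spanning trees for the costs $C_e$ and $c_e$, respectively, computed by any standard polynomial-time MST algorithm. Your write-up merely spells out in more detail why conditions (i) and (ii) of Theorem~\ref{tssoc} hold, which the paper leaves as an immediate observation.
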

\begin{proof}
	Let $X$ be a minimum spanning tree for the costs $C_e$ and $Y$
	 be  a minimum spanning tree for the costs $c_e$, $e\in E$. 
	 Since $\theta=0$, we set $\alpha_e=0$, $\beta_e=0$ for each $e\in E$. 
	 It is clear that  $(X,Y)$ satisfies  the  sufficient pair optimality conditions.
\end{proof}

Assume that  $(X,Y)$ satisfies the sufficient  pair optimality conditions for some $\theta\geq 0$.
If, for this pair,  $|E_Z|\geq L$ and $\theta(|E_Z|-L)=0$, then we are done, because by Theorem~\ref{tloc},  the pair~$(X,Y)$ is 
 optimal to~(\ref{mipst_o})-(\ref{mipst_8}).
 Suppose that $|E_Z|<L$ ($(X,Y)$ is not feasible to~(\ref{mipst_o})-(\ref{mipst_8})).
 We will now show a polynomial time procedure for finding a new pair  $(X',Y')$, which satisfies the  sufficient pair optimality conditions and $|E_{Z'}|=|E_Z|+1$. 
 This implies a polynomial time algorithm for the problem (\ref{mipst_o})-(\ref{mipst_8}), since it is enough to start with a pair satisfying  the sufficient  pair optimality conditions
 for $\theta=0$ (see Lemma~\ref{ltheta0}) and repeat the procedure at most $L$ times, i.e. until $|E_{Z'}|=L$.

Given a  spanning tree~$T$ in~$G=(V,E)$ and edge~$e=\{k,l\}\not\in T$,
let us denote by $P_T(e)$ the unique path in~$T$ connecting nodes~$k$ and~$l$. It is well known that for any $f\in P_T(e)$,  $T'=T\cup\{e\}\setminus \{f\}$ is also a spanning tree in $G$. We will say that $T'$ is the result of a \emph{move} on $T$.

Consider a pair $(X,Y)$  that satisfies the sufficient  pair optimality conditions for some fixed 
$\theta\geq 0$.  Set $C^*_e=C_e-\alpha_e$ and $c^*_e=c_e-\beta_e$ for every $e\in E$, where $\alpha_e$ and $\beta_e$, $e\in E$, are the numbers which satisfy the conditions in Theorem~\ref{tssoc}. Thus,
by Theorem~\ref{tssoc}(i) and the \emph{path optimality conditions} (see, e.g., \cite{AMO93}),
we get  the following  conditions which must be satisfied by $(X,Y)$:
\begin{subequations}
\begin{align}
   &\text{for  every }  e\notin X& &C^*_e \geq C^*_f & & \text{for  every }  f\in P_{X}(e), \label{toptca}\\
   &\text{for  every }  e\notin Y&  &c^*_e \geq c^*_f & & \text{for  every }  f\in P_{Y}(e).\label{toptcb}
\end{align}
\label{toptc}
\end{subequations}

We now build a so-called \emph{admissible graph} $G^A=(V^A, E^A)$ in two steps. 
We first  associate with each edge~$e\in E$  a node~$v_{e}$ and include it to $V^A$, $|V^A|=|E|$.
We then add arc $(v_e, v_f)$ to $E^A$ if  $e\notin X$, $f\in P_{X}(e)$ and $C^*_e=C^*_f$. This arc is called an \emph{$X$-arc}.
We also add arc  $(v_f,v_e)$ to $E^A$ if  $e\notin Y$, $f\in P_{Y}(e)$ and $c^*_e=c^*_f$. This arc is called an \emph{$Y$-arc}.
We say that $v_e\in V^A$ is  \emph{admissible} if
  $e\in E_Y$, or $v_e$ is reachable from a node $v_g\in V^A$, such that  $g\in E_Y$,
  by a directed path in $G^A$. 
  In the second step
  we remove from $G^A$ all the nodes which are not admissible, together with their incident arcs. 
  An  example of an admissible graph is shown in Figure~\ref{fga}.  Each node of this admissible graph is reachable from some node $v_g$, $g\in E_Y$. Note that the arcs $(v_{e_7}, v_{e_6})$  and $(v_{e_7}, v_{e_{10}})$ are not present in $G^A$, because $v_{e_7}$ is not reachable from any node $v_g$, $g\in E_Y$. These arcs have been removed from $G^A$ in the second step.
  
  Observe that each $X$-arc $(v_e, v_f)\in E^A$ represents a move on $X$, namely $X'=X\cup \{e\} \setminus \{f\}$ is a spanning tree in $G$. Similarly, each $Y$-arc $(v_e,v_f)\in E^A$ represents a move on $Y$, namely $Y'=Y\cup\{f\} \setminus \{e\}$ is a spanning tree in $G$.  Notice that the cost, with respect to $C^*_e$, of $X'$  is the same as $X$ and the cost, with respect to $c^*_e$, of $Y'$  is the same as $Y$. So, the moves indicated by $X$-arcs and $Y$-arcs preserve the optimality of $X$ and $Y$, respectively.
Observe that $e\notin X$ or $e\in Y$, which implies $e\notin E_X$. Also $f\in X$ or $f\notin Y$, which implies $f\notin E_Y$. 
  Hence, no arc in $E^A$ can start in a node corresponding to an edge in $E_X$ and no arc in $E^A$ can end in a node corresponding to an edge in $E_Y$. Observe also that $(v_e,v_f)\in E^A$ can be both $X$-arc and $Y$-arc only if $e\in E_Y$ and $f\in E_X$. Such a case is shown in Figure~\ref{fga} (see the arc $(v_{e_1},v_{e_2})$). 
  Since each arc  $(v_e,v_f)\in E^A$ represents a move on $X$ or $Y$, $e$ and $f$ cannot both belong to $E_W$ or $E_Z$.
 \begin{figure}[h]
\centering
\includegraphics[width=13.3cm]{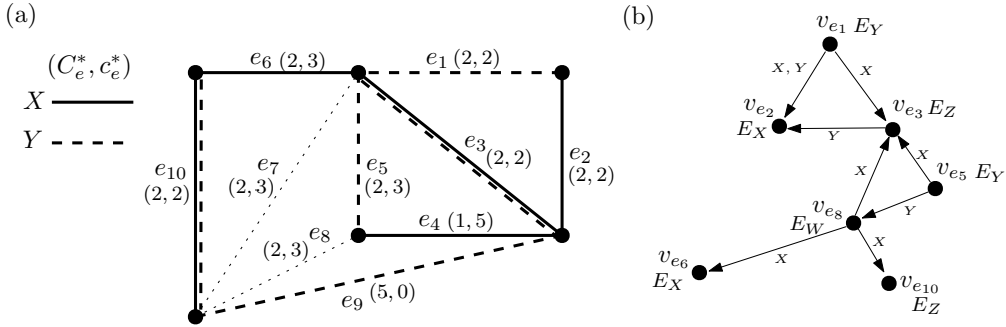}
\caption{(a) A pair $(X,Y)$ such that $X=\{e_2,e_3,e_4,e_6,e_{10}\}$ and $Y=\{e_1,e_3,e_5,e_9,e_{10}\}$. 
(b)~The admissible graph $G^A$ for $(X,Y)$.} \label{fga}
\end{figure}
    
We will consider two cases: $E_X\cap \{e\in E\,:\, v_e\in V^A\}  \neq \emptyset$ and
$E_X\cap \{e\in E\,:\, v_e\in V^A\}  = \emptyset$. The first case means that there is a directed path from $v_e$, $e\in E_Y$, to a node $v_f$, $f\in E_X$, in the admissible graph $G^A$ and in the second case no such a path exists.  We will show that in the first case it is possible to find a new pair $(X', Y')$ which satisfies the  sufficient pair optimality conditions and $|E_{Z'}|=|E_{Z}|+1$. The idea will be to perform a sequence of moves on $X$ and $Y$, indicated by the arcs on some suitably chosen path from $v_e$, $e\in E_Y$, to $v_f$, $f\in E_X$ in the admissible graph $G^A$. Let us formally handle this case.
 \begin{lem}
  \label{tXVA}
   If $E_X\cap \{e\in E\,:\, v_e\in V^A\}  \neq \emptyset$, then there exists a pair
   $(X', Y')$ with $|E_{Z'}|=|E_Z|+1$, which satisfies the sufficient  pair optimality conditions for $\theta$.
  \end{lem}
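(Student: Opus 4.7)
The plan is to construct $(X',Y')$ by a shortest-augmenting-path argument in the admissible graph $G^A$. By hypothesis some $f\in E_X$ has $v_f\in V^A$, so by the definition of admissibility there is a directed path in $G^A$ from some $v_g$ with $g\in E_Y$ to $v_f$; I will fix a \emph{shortest} such path and write it as $P\colon v_{e_1}\to v_{e_2}\to\cdots\to v_{e_p}$ with $e_1\in E_Y$ and $e_p\in E_X$. I then define $X'$ by applying, in parallel, the move $X\leftarrow X\cup\{e_i\}\setminus\{e_{i+1}\}$ for every $X$-arc $(v_{e_i},v_{e_{i+1}})$ on $P$, and define $Y'$ analogously from the $Y$-arcs of $P$.

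A first observation is that the arcs along $P$ strictly alternate between $X$-arcs and $Y$-arcs, because two consecutive arcs of the same type would force the shared middle node to both lie in and outside of the corresponding tree. Combined with the membership constraints encoded in the arc definitions ($X$-arcs leave $E\setminus X$ and enter $X$; $Y$-arcs leave $Y$ and enter $E\setminus Y$), each intermediate node $v_{e_i}$ with $1<i<p$ falls into $E_Z$ (when preceded by an $X$-arc and followed by a $Y$-arc) or into $E_W$ (opposite pattern), which is consistent with $e_1\in E_Y$ and $e_p\in E_X$.

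The hardest step will be proving $X',Y'\in\Phi$. Their cardinalities are $n-1$ by construction, so the real content is acyclicity. Each individual arc on $P$ encodes a valid basis exchange in the graphic matroid, but composing several simultaneously is not automatic. Here the minimality of $P$ is essential: a cycle in $X'$ would yield an $X$-arc of $G^A$ between two non-adjacent nodes of $P$ --- a ``chord'' --- and splicing it into $P$ would give a strictly shorter directed path from an $E_Y$-source to an $E_X$-sink, contradicting the choice of $P$. I plan to formalise this by a Krogdahl-style pairing of the added and removed edges along $P$ through their fundamental cycles in $X$, and to use the symmetric argument on $Y$ to conclude that $Y'\in\Phi$ as well.

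Once $X',Y'\in\Phi$ is established, the remaining items follow quickly. Every arc on $P$ joins two edges of equal $C^*$- or $c^*$-cost, so $X'$ (resp.\ $Y'$) remains a minimum spanning tree for the costs $C^*_e$ (resp.\ $c^*_e$), and condition~(i) of Theorem~\ref{tssoc} holds for $(X',Y')$ with the same multipliers $\alpha_e,\beta_e$. The only membership migrations induced by the construction are: $e_1$ moves from $E_Y$ into $E_Z$ or $E_W$; $e_p$ moves from $E_X$ into $E_Z$ or $E_W$; and every intermediate node toggles between $E_Z$ and $E_W$. In particular $E_{X'}\subseteq E_X$ and $E_{Y'}\subseteq E_Y$, so condition~(ii) is inherited. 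Finally, a short case analysis on the types of the first and last arcs of $P$ tallies the contributions to $|E_{Z'}|-|E_Z|$ and shows that the net change equals exactly $+1$ in each of the four cases, completing the construction.
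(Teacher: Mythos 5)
Your overall strategy is the same as the paper's: take a shortest path $P$ in the admissible graph $G^A$ from a node of $E_Y$ to a node of $E_X$, perform all exchanges indicated by its $X$-arcs and $Y$-arcs simultaneously, and use minimality of $P$ to rule out a failure of $X'$ or $Y'$ to be a spanning tree; your bookkeeping of the membership migrations, the inheritance of conditions (i)--(ii) of Theorem~\ref{tssoc} from $E_{X'}\subseteq E_X$, $E_{Y'}\subseteq E_Y$, and the count $|E_{Z'}|=|E_Z|+1$ all match the paper's argument.

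The gap is in the central step. A cycle in $X'$ does \emph{not} directly yield ``an $X$-arc of $G^A$ between two non-adjacent nodes of $P$''; what it yields is only a containment $f_{i''}\in P_{X}(e_{i'})$ between two exchange pairs of $P$, i.e., an arc of the cycle graph $G(X)$, for which the path optimality conditions give merely $C^*_{e_{i'}}\geq C^*_{f_{i''}}$. To contradict the choice of $P$ you need this chord to lie in the admissible graph, i.e., the equality $C^*_{e_{i'}}=C^*_{f_{i''}}$, and you also need it to be a \emph{forward} chord ($i'<i''$), since a backward chord does not shorten $P$. The paper gets both from Claim~\ref{ccycleg}: if the simultaneous exchange fails, $G(X)$ contains a closed alternating structure whose arcs alternate between arcs coming from $P$ (tight equalities, by definition of $G^A$) and cycle-graph arcs (inequalities from path optimality); chaining these around the cycle forces all the involved $C^*$-costs to be equal --- here the minimality of $X$ for the costs $C^*_e$ is used --- and a cyclic arrangement of indices must contain an arc with $i'<i''$. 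Your Krogdahl-style pairing can produce the combinatorial containment, but the cost-equality upgrade and the forward orientation of the chord are exactly the points your sketch leaves unproved, and without them the shortest-path contradiction does not go through; the same remark applies to the symmetric argument for $Y'$.
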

  \begin{proof}  
  We begin by introducing the notion of a \emph{cycle graph} $G(T)=(V^T,A^T)$,
  corresponding to a given spanning tree~$T$ of graph~$G=(V,A)$.
  We build~$G(T)$ as follows: we associate 
   with each edge~$e\in E$  a node~$v_{e}$ and include it to $V^T$, $|E|=|V^T|$;
   then we add arc $(v_e,v_f)$ to $A^T$ if $e\not \in T$ and $f\in P_{T}(e)$.
  An example is shown in Figure~\ref{fcyclegraph}.
\begin{figure}[h]
\centering
\includegraphics{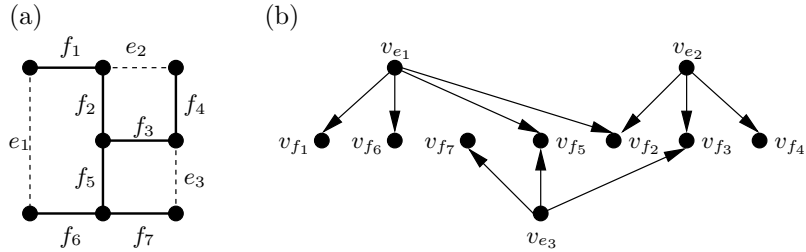}
\caption{(a) A graph $G$ with a spanning tree $T$ (the solid lines). (b) The cycle graph $G(T)$.} \label{fcyclegraph}
\end{figure}

\begin{cla}
\label{ccycleg}
Given a spanning tree $T$ of~$G$, 
let $\mathcal{F}=\{(v_{e_1},v_{f_1}), (v_{e_2},v_{f_2}),\dots, (v_{e_\ell}, v_{f_\ell})\}$ be a subset of arcs of $G(T)$, where all $v_{e_i}$ and  $v_{f_i}$  (resp. $e_i$ and $f_i$), $i\in  [\ell]$,  are distinct. If 
$T'=T\cup \{e_1, \dots,e_\ell\} \setminus \{f_1,\dots,f_\ell\}$ is not a spanning tree, then $G(T)$ contains a subgraph depicted in Figure~\ref{fcycle}, where $\{j_1,\dots,j_\kappa\}\subseteq [\ell]$.	
\end{cla}  

 \begin{figure}[h]
\centering
\includegraphics{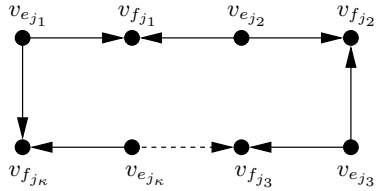}
\caption{A subgraph of $G(T)$ from Claim~\ref{ccycleg}.} \label{fcycle}
\end{figure}
Let us illustrate Claim~\ref{ccycleg} by using the sample graph in Figure~\ref{fcyclegraph}. 
Suppose that 
$\mathcal{F}=\{(v_{e_1},v_{f_5}), (v_{e_2},v_{f_2}), (v_{e_3},v_{f_3})\}$. 
Then $T'=T\cup \{e_1,e_2,e_3,\} \setminus \{f_5, f_2, f_3\}$ is not a spanning tree and $G(T)$ contains the  subgraph composed of the following arcs (see Figure~\ref{fcyclegraph}): 
\[
(v_{e_1},v_{f_2}),(v_{e_2},v_{f_2}), (v_{e_2},v_{f_3}),(v_{e_3},v_{f_3}),
(v_{e_3},v_{f_5}),(v_{e_1},v_{f_5}).
\]

\begin{proof}[Proof of Claim~\ref{ccycleg}]
We form $T'$ by performing a sequence of moves consisting in adding edges~$e_i$ and removing
edges $f_i\in P_{T}(e_i)$, $i\in [\ell]$. 
Suppose that, at some step, a cycle appears, which is formed by some edges from $\{e_1,\dots,e_{\ell}\}$ and 
the remaining edges of~$T$
(not removed  from $T$). Such a cycle must appear, since otherwise $T'$ would be a spanning tree.
Let us relabel the edges so that $\{e_1,\dots,e_s\}$ are on this cycle, i.e. the first $s$ moves consisting in adding $e_i$ and removing $f_i$ create the cycle, $i\in[s]$. An example of such a situation for $s=4$ is shown in Figure~\ref{fclaimgraph}. The cycle is formed by the edges $e_1,\dots,e_4$ and the paths  $P_{v_2v_3}$, $P_{v_4v_5}$ and $P_{v_1v_6}$ in $T$. Consider the edge $e_1=\{v_1,v_2\}$.  Because $T$ is a spanning tree, $P_T(e_1)\subseteq P_{v_2v_3}\cup P_T(e_2) \cup P_T(e_3)\cup P_{v_4v_5}\cup P_T(e_4)\cup P_{v_1v_6}$. Observe that $f_1\in P_T(e_1)$ cannot belong to any of $P_{v_2v_3}$, $P_{v_4v_5}$ and $P_{v_1v_6}$. If it would be contained in one of these paths, then no cycle would be created.
Hence, $f_1$ must belong to $P_T(e_2)\cup P_T(e_3)\cup P_T(e_4)$.
The above argument is general and, by using it, we can show that for each $i\in [s]$, $f_i\in P_T(e_j)$ for some $j\in [s]\setminus \{i\}$.  
\begin{figure}[h]
\centering
\includegraphics{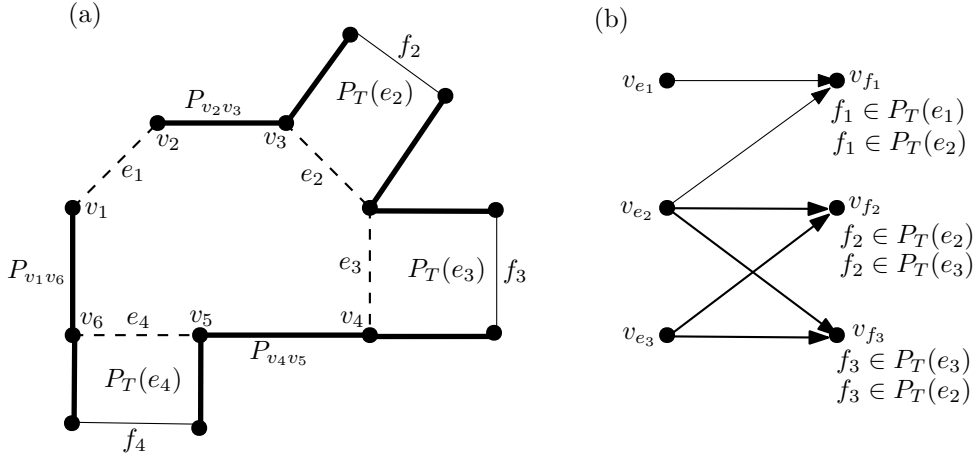}
\caption{Illustration for the proof of Claim~\ref{ccycleg}. (a) The bold lines represent paths in $T$ (not necessarily disjoint); $f_1\in P_T(e_2)\cup P_T(e_3)\cup P_T(e_4)$. (b) The subgraph $G'(T)$ with the corresponding cycle.} \label{fclaimgraph}
\end{figure}

We are now ready to build a subgraph depicted in Figure~\ref{fcycle}.  Consider a subgraph $G'(T)$ of the cycle graph $G(T)$ built as follows. The nodes of $G'(T)$ are $v_{e_1},\dots,v_{e_s},v_{f_1},\dots,v_{f_s}$. Observe that $G'(T)$ has exactly $2s$ nodes, since all the edges $e_1,\dots,e_s, f_1,\dots,f_s$ are distinct by the assumption of the claim. For each $i\in [s]$ we add to $G'(T)$ two arcs, namely $(v_{e_i},v_{f_i})$, $f_i\in P_T(e_i)$ and $(v_{e_j}, v_{f_i})$, $f_i\in P_T(e_j)$ for $j\in [s]\setminus \{i\}$ (see Figure~\ref{fclaimgraph}). The resulting graph $G'(T)$ is bipartite and has exactly $2s$ arcs. In consequence $G'(T)$ (and thus $G(T)$) must contain a cycle which is of the form depicted in Figure~\ref{fcycle}. 
\end{proof}
 
After this preliminary step, we can now return to
the main proof.    
  If $E_X\cap \{e\in E\,:\, v_e\in V^A\}  \neq \emptyset$,
   then, by the construction of the admissible graph, there exists a directed path in $G^A$ from a node~$v_e$,  $e\in E_Y$, 
    to a node~$v_f$, $f\in E_X$.
  Let $P$ be a shortest such a path from $v_e$ to $v_f$, i.e. a path consisting of the fewest number of arcs,
  called an \emph{augmenting path}.
   We need to consider the following cases:
  \begin{enumerate}
  \item 
  \label{auc1}
  The augmenting path~$P$ is of the form:
$$\begin{array}{lcl}
	E_Y  & & E_X\\
	v_e& \rightarrow &  v_f
	\end{array}$$
	If $(v_e,v_f)$ is $X$-arc,  then  $X'=X\cup \{e\} \setminus \{f\}$ is an updated spanning tree of~$G$ such that $|X'\cap Y|=|E_Z|+1$. Furthermore $X'$ is a minimum spanning tree for the costs $C^*_e$ and the new pair $(X',Y)$ satisfies the sufficient pair optimality conditions ($E_{X'}\subseteq E_{X}$, so condition (ii) in Theorem~\ref{tssoc} is not violated). If $(v_e,v_f)$ is $Y$-arc, then $Y'=Y\cup \{f\} \setminus \{e\}$ is an updated spanning tree of~$G$ such that $|X\cap Y'|=|E_Z|+1$. Also $Y'$ is a minimum spanning tree for the costs $c^*_e$ and the new pair $(X,Y')$ satisfies the sufficient pair optimality conditions.
An example can be seen in Figure~\ref{fga}. There is a path $v_{e_1}\rightarrow v_{e_2}$ in the admissible graph. The arc $(v_{e_1},v_{e_2})$  is both $X$-arc and $Y$-arc. We can thus choose one of the two possible moves $X'=X\cup\{e_1\}\setminus \{e_2\}$ or $Y'=Y\cup\{e_2\} \setminus \{e_1\}$, which results in $(X',Y)$ or $(Y', X)$.
	
\item
\label{auc2}
 The augmenting path~$P$ is of the form:
	$$
	\arraycolsep=3.5pt
	\begin{array}{lllllllllllllllllllll}
	& E_Y  & & E_Z & & E_W & & E_Z & & E_W & & E_Z & & & &  E_W & & E_Z & & E_X\\
	(a) & v_{e_1} & \stackrel{X}{\rightarrow} &  v_{f_1} &  \stackrel{Y}{\rightarrow}  &  v_{e_2} & \stackrel{X}{\rightarrow}  & v_{f_2} &  \stackrel{Y}{\rightarrow}  & v_{e_3} & \stackrel{X}{\rightarrow}  &  v_{f_3} & \stackrel{Y}{\rightarrow}  & \cdots &\stackrel{Y}{\rightarrow}   & v_{e_\ell} & \stackrel{X}{\rightarrow}  & v_{f_\ell}  & \stackrel{Y}{\rightarrow}  & v_{e_{\ell+1}} \\
	   &  & &   & & & & & & & &&&& & & & E_X  \\
	   (b) & & &   & & & & & & & &&&& & &  \stackrel{X}{\rightarrow}  &   v_{f_\ell}
	\end{array}$$
	Let $X'=X\cup\{e_1,\dots,e_\ell\}\setminus\{f_1,\dots,f_\ell\}$. Let $Y'=Y\cup\{e_2,\dots,e_{\ell+1}\}\setminus\{f_1,\dots f_\ell\}$ for case $(a)$, and $Y'=Y\cup\{e_2,\dots,e_\ell\}\setminus\{f_1,\dots f_{\ell-1}\}$ for case $(b)$. 
	We now have to show that the resulting pair $(X', Y')$ is a pair of spanning trees. Suppose that $X'$ is not a spanning tree. Observe that the $X$-arcs
$(v_{e_1},v_{f_1}),\ldots,(v_{e_\ell},v_{f_\ell})$ belong to the cycle graph $G(X)$. Thus, by Claim~\ref{ccycleg},
 the cycle graph $G(X)$ must contain a subgraph depicted in Figure~\ref{fcycle}, where $\{j_1,\dots,j_\kappa\}\subseteq [\ell]$. An easy verification shows that all edges $e_i, f_i$, $i\in \{j_1,\dots,j_\kappa\}$ must have the same costs with respect to $C^*_e$.
 Indeed, if some costs are different, then there exists an edge exchange which decreases the cost of $X$.
 This contradicts our assumption that~$X$ is a minimum spanning tree with respect to $C^*_e$.
  Finally, there must be an arc $(v_{e_{i'}}, v_{f_{i''}})$ in the subgraph such that $i'<i''$. 
  Since $C^*_{e_{i'}}=C^*_{f_{i''}}$, the arc
  $(v_{e_{i'}}, v_{f_{i''}})$ is present in the admissible graph~$G^A$.
  This leads to a contradiction with our assumption that $P$ is an augmenting path. Now suppose that $Y'$ is not a spanning tree. 
  We consider only the case $(a)$ since the proof of case $(b)$ is just the same. For a convenience, let us number the nodes $v_{e_i}$ on $P$ from $i=0$ to $\ell$, so that $Y'=\{e_1,\dots,e_{\ell}\}\setminus \{f_1,\dots,f_{\ell}\}$. The arcs $(v_{e_1}, v_{f_1}), \dots, (v_{e_{\ell}},v_{f_{\ell}})$, which correspond to the $Y$-arcs $(v_{f_1},v_{e_1}),\dots, (v_{f_{\ell}},v_{e_{\ell}})$ of $P$, belong to the cycle graph $G(Y)$. Hence, by Claim~\ref{ccycleg}, $G(Y)$  must contain a subgraph depicted in Figure~\ref{fcycle}, where $\{i_1,\dots,i_{\kappa}\}\subseteq [\ell]$. The rest of the proof is similar to the proof for $X$. Namely, the edges $e_i$ and $f_i$ for $i\in \{i_1,\dots,i_{\kappa}\}$ must have the same costs with respect to $c^*_e$. Also, there must exist an arc $(v_{e_{i'}}, v_{f_{{i''}}})$ in the subgraph such that $i'>i''$. In consequence, the arc $(v_{f_{i''}}, v_{e_{i'}})$ belongs to the admissible graph, which contradicts the assumption that $P$ is an augmenting path.
  
	 An example of the case $(a)$ is shown in Figure~\ref{figex2}. Thus $X'=X\cup\{e_1,e_2,e_3,e_4\}\setminus\{f_1,f_2,f_3,f_4\}$ and $Y'=Y\cup\{e_2,e_3,e_4,e_5\}\setminus\{f_1,f_2,f_3,f_4\}$. An example of the case (b) is shown in Figure~\ref{figex3}. In this example $X'$ is the same as in the previous case and $Y'=Y\cup\{e_2,e_3,e_4\}\setminus\{f_1,f_2,f_3\}$.

It is easy to verify that $|E_{Z'}|=|X'\cap Y'|=|E_{Z}|+1$ holds (see also the examples in Figures~\ref{figex2} and~\ref{figex3}). The spanning trees  $X'$  and $Y'$ are optimal for the costs $C^*_e$ and $c^*_e$, respectively. Furthermore, $E_{X'}\subseteq E_X$ and $E_{Y'}\subseteq E_Y$, so $(X',Y')$ 
satisfies the sufficient pair  optimality conditions (the condition (ii) in Theorem~\ref{tssoc} is not violated).
\begin{figure}[h]
\centering
\includegraphics{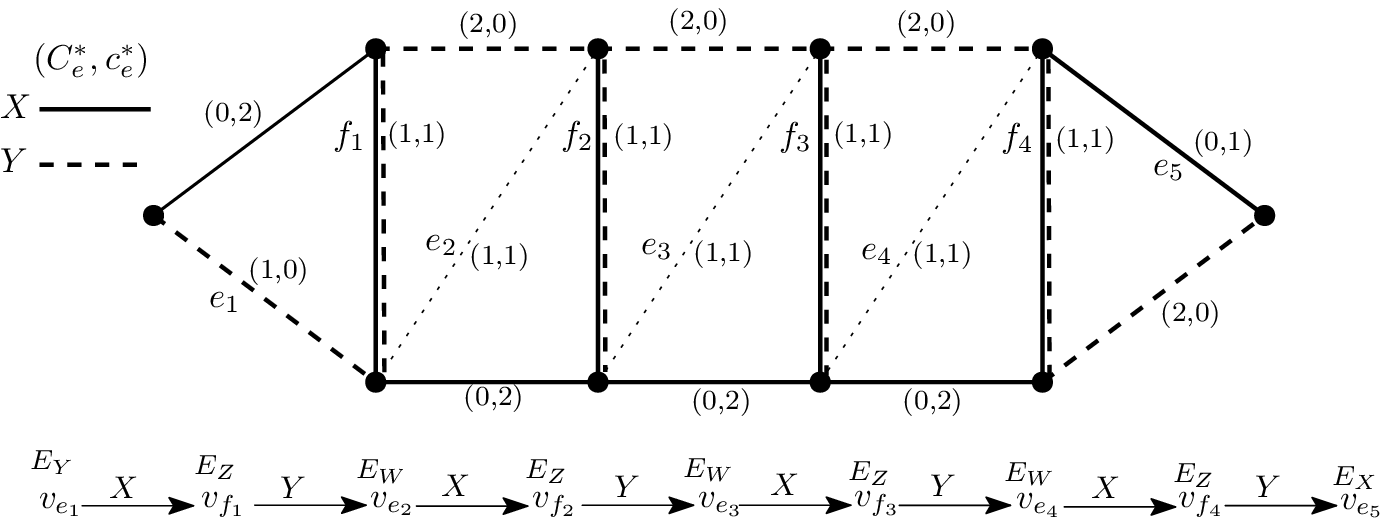}
\caption{A  pair $(X,Y)$ and the corresponding admissible graph for 
             the case~\ref{auc2}a.} \label{figex2}
\end{figure}
  \begin{figure}[h]
\centering
\includegraphics{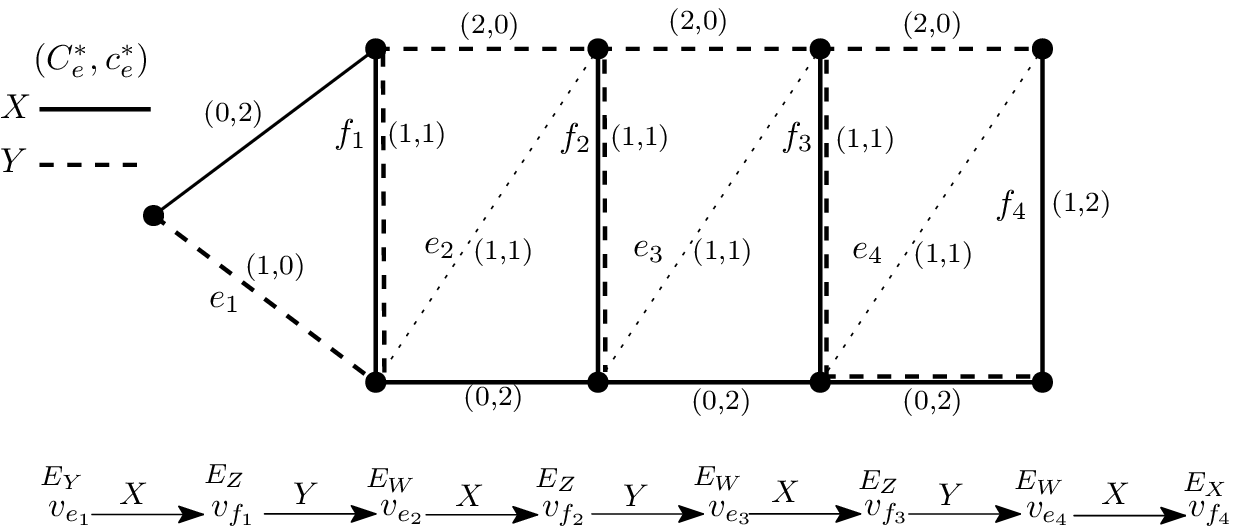}
\caption{A pair $(X,Y)$ and the corresponding admissible graph for the case~\ref{auc2}b.} \label{figex3}
\end{figure}

\item
\label{auc3}
The augmenting path $P$ is of the form
	$$
	\arraycolsep=3.5pt
	\begin{array}{llllllllllllllllllllll}
	& E_Y  & & E_W & & E_Z & & E_W & & E_Z & & E_W & & & &  E_Z & & E_W & & E_X\\
	(a) & v_{e_1} & \stackrel{Y}{\rightarrow} &  v_{f_1} &  \stackrel{X}{\rightarrow} &  v_{e_2} & \stackrel{Y}{\rightarrow} & v_{f_2} &  \stackrel{X}{\rightarrow} &  v_{e_3} & \stackrel{Y}{\rightarrow} &  v_{f_3} & \stackrel{X}{\rightarrow} & \dots & \stackrel{X}{\rightarrow}  & v_{e_\ell} & \stackrel{Y}{\rightarrow} & v_{f_\ell}  &\stackrel{X}{\rightarrow}& v_{e_{\ell+1}} \\
	    & & &   & & & & & & & &&&& & & & E_X  \\
	    (b) & & &   & & & & & & & &&&& & &  \stackrel{Y}{\rightarrow} &   v_{f_\ell}
	\end{array}$$
	Let $X'=X\cup\{f_1,\dots,f_\ell\}\setminus\{e_2,\dots,e_{\ell+1}\}$ for the case $(a)$ and
	$X'=X\cup\{f_1,\dots,f_{\ell-1}\}\setminus\{e_2,\dots e_\ell\}$ for the case $(b)$. Let $Y'=Y\cup\{f_1,\dots,f_\ell\}\setminus\{e_1,\dots e_\ell\}$. The proof that $X'$ and $Y'$ are spanning trees follows by the same arguments as for the symmetric case described in point~\ref{auc2}.
	An example of the case $(a)$ is shown in Figure~\ref{figex4}. Thus $X'=X\cup\{f_1,f_2,f_3,f_4\}\setminus\{e_2,e_3,e_4,e_5\}$ and $Y'=Y\cup\{f_1,f_2,f_3,f_4\}\setminus\{e_1,e_2,e_3,e_4\}$. An example for the case $(b)$ is shown in Figure~\ref{figex5}. The spanning tree $Y'$ is the same as in the previous case and $X'=X\cup\{f_1,f_2,f_3\}\setminus\{e_2,e_3,e_4\}$. 
		The equality $|E_{Z'|}=|X'\cap Y'|=|E_Z|+1$ holds. Also, the trees $X'$  and $Y'$ are optimal for the costs $C^*_e$ and $c^*_e$, respectively, $E_{X'}\subseteq E_X$, $E_{Y'}\subseteq E_Y$, so $(X',Y')$  satisfies the sufficient pair  optimality conditions.
\begin{figure}[h]
\centering
\includegraphics{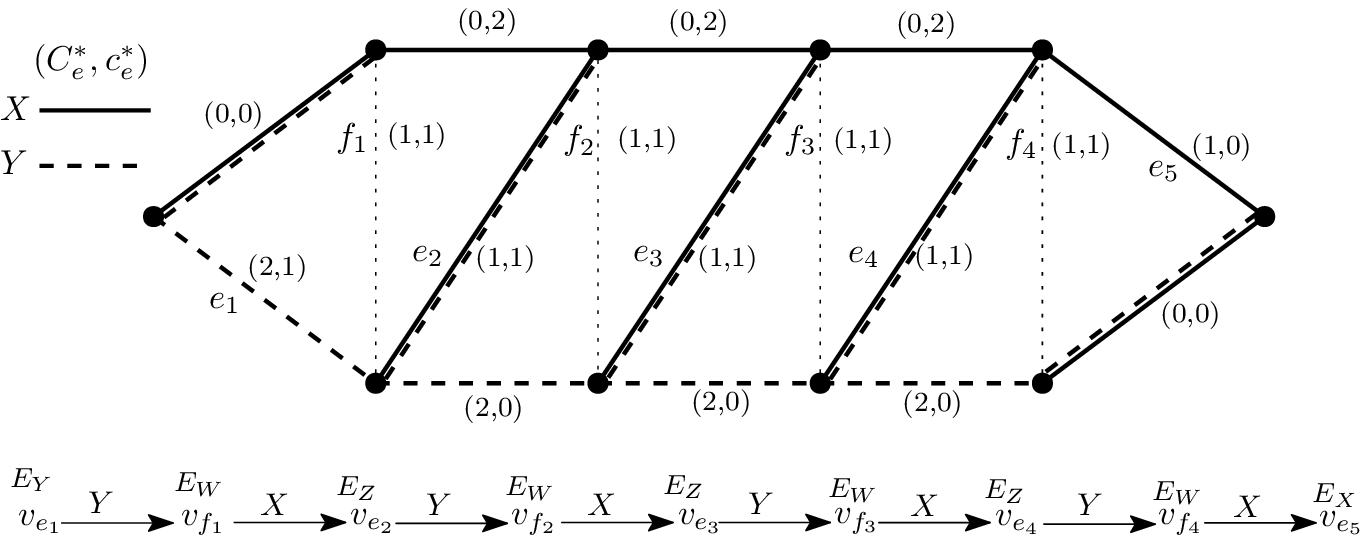}
\caption{A  pair $(X,Y)$ and the corresponding admissible graph for the case~\ref{auc3}a.} \label{figex4}
\end{figure}
\begin{figure}[h]
\centering
\includegraphics{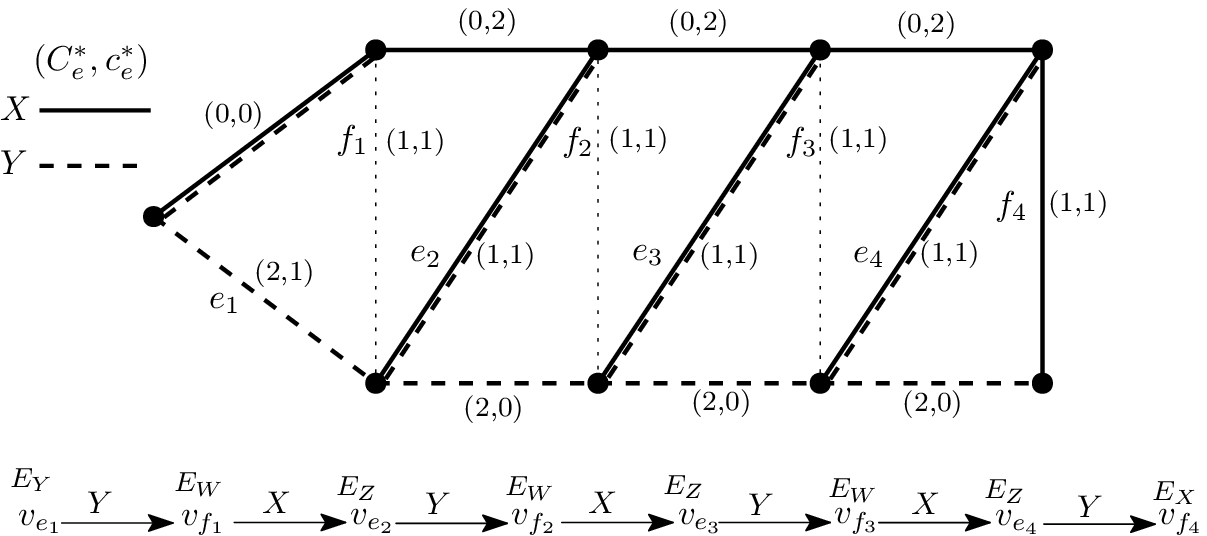}
\caption{A  pair $(X,Y)$ and the corresponding admissible graph for the case~\ref{auc3}b.} \label{figex5}
\end{figure}
 \end{enumerate}
 \end{proof} 

We now turn to the case $E_X\cap \{e\in E\,:\, v_e\in V^A\}  = \emptyset$.
Fix $\delta>0$
(the precise value of $\delta$ will be specified later) and set:
\begin{subequations}
\begin{align}
   &C_e(\delta)=C^*_e-\delta,  &&c_e(\delta)=c^*_e          && v_e \in V^A, \label{newca} \\
   &C_e(\delta)=C^*_e,            &&c_e(\delta)=c^*_e-\delta && v_e\notin V^A. \label{newcb} 
\end{align}
\label{newc}
\end{subequations}
\begin{lem}
\label{lemdelta}
  There exists a sufficiently small $\delta>0$ such that
   the costs $C_e(\delta)$ and $c_e(\delta)$ 
	satisfy the path optimality conditions for $X$ and $Y$, respectively, i.e:
\begin{subequations}
\begin{align}
   &\text{for  every }  e\notin X& &C_e(\delta) \geq C_f(\delta) & & \text{for  every }  f\in P_{X}(e), \label{toptcsa}\\
   &\text{for  every }  e\notin Y&  &c_e(\delta) \geq c_f(\delta) & & \text{for  every }  f\in P_{Y}(e).\label{toptcsb}
\end{align}
\label{toptcs}
\end{subequations}
\end{lem}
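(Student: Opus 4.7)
My plan is to verify the two path optimality conditions~(\ref{toptcsa}) and~(\ref{toptcsb}) by a case analysis, splitting each relevant pair of edges according to whether the associated nodes lie in $V^A$ or not. The starting point is that, by hypothesis, $(X,Y)$ already satisfies the path optimality conditions~(\ref{toptc}) with respect to $C^*_e$ and $c^*_e$.

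First I would check~(\ref{toptcsa}). Fix $e \notin X$ and $f \in P_X(e)$; by~(\ref{toptca}) we have $C^*_e \geq C^*_f$. Four cases arise depending on the membership of $v_e, v_f$ in $V^A$. If $v_e, v_f$ are both in $V^A$ or both outside $V^A$, then $C_e(\delta) - C_f(\delta) = C^*_e - C^*_f \geq 0$ for every $\delta$. If $v_e \notin V^A$ while $v_f \in V^A$, then $C_e(\delta) - C_f(\delta) = C^*_e - C^*_f + \delta \geq 0$ for every $\delta \geq 0$. The only delicate case is $v_e \in V^A$ and $v_f \notin V^A$, where $C_e(\delta) - C_f(\delta) = C^*_e - C^*_f - \delta$; here I would argue that the inequality is in fact strict, i.e.\ $C^*_e > C^*_f$. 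Indeed, if $C^*_e = C^*_f$, then by construction of $G^A$ the $X$-arc $(v_e,v_f)$ would be present, and since $v_e$ is admissible (reachable in $G^A$ from some $v_g$ with $g \in E_Y$), $v_f$ would be admissible as well, contradicting $v_f \notin V^A$. An entirely symmetric argument handles~(\ref{toptcsb}); the delicate case here is $v_e \in V^A$, $v_f \notin V^A$ would give $c_e(\delta)-c_f(\delta) = c^*_e-c^*_f+\delta$ which is fine, whereas $v_e \notin V^A$, $v_f \in V^A$ gives $c_e(\delta)-c_f(\delta) = c^*_e - c^*_f - \delta$ and the needed strict inequality $c^*_e > c^*_f$ follows because otherwise the $Y$-arc $(v_f,v_e)$ would be in $G^A$ and the admissibility of $v_f$ would propagate to $v_e$.

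Having established strict inequality in the two delicate cases, I would finish by setting
\[
\delta = \min\!\Bigl\{\min\{C^*_e - C^*_f : e\notin X,\ f\in P_X(e),\ v_e\in V^A,\ v_f\notin V^A\},\ \min\{c^*_e - c^*_f : e\notin Y,\ f\in P_Y(e),\ v_e\notin V^A,\ v_f\in V^A\}\Bigr\},
\]
with the convention that an empty minimum equals $+\infty$. Both minima are strictly positive by the admissibility argument above and both are taken over finite sets, so $\delta > 0$. With this choice, every case in the analysis yields a nonnegative difference, which is exactly~(\ref{toptcs}).

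The main technical obstacle is the two mixed cases, and the content of the proof is the observation that the pruning step in the construction of $G^A$ (keeping only nodes reachable from some $v_g$ with $g \in E_Y$) forces the $C^*_e$-gap (resp.\ $c^*_e$-gap) across the ``boundary'' of $V^A$ in the relevant direction to be strict. The remaining cases are immediate from $C^*_e \geq C^*_f$ and $c^*_e \geq c^*_f$ together with the sign of the $\delta$-correction.
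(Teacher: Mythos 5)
Your proof is correct and follows essentially the same route as the paper: the non-mixed and favourably-signed cases are immediate, and in the delicate mixed cases a tight equality is impossible because the corresponding $X$-arc or $Y$-arc would make the non-admissible endpoint reachable from an admissible one, so the gap is strictly positive and a sufficiently small $\delta>0$ (your explicit minimum) works. The only difference is cosmetic: you give $\delta$ by an explicit formula, while the paper simply takes $\delta$ small enough to preserve all originally strict inequalities.
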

\begin{proof}
	If $C^*_e>C^*_f$  (resp.  $c^*_e >c^*_f$), $e\notin X, f\in P_{X}(e)$ (resp. $e\notin Y, f\in P_{Y}(e)$), 
	then there is $\delta>0$, such that after setting the new costs~(\ref{newc}) the inequality
	$C_e(\delta)\geq C_f(\delta)$ (resp. $c_e(\delta) \geq c_f(\delta) $) holds. Hence, one can choose  a sufficiently small~$\delta>0$ such that after setting the new costs~(\ref{newc}),
	all the strong inequalities 
	are not violated.
		Therefore,  for such a chosen~$\delta$
	it remains to show that all originally tight inequalities in~(\ref{toptc}) are preserved for the new costs.  
	Consider a tight inequality of the form: 
      	 \begin{equation}
		\label{ep1}	
		C^*_e=C^*_f, \,e\notin X, \,f\in P_{X}(e).
	\end{equation}
On the contrary, suppose that  $C_e(\delta)<C_f(\delta)$.  This is only possible when $C_e(\delta)=C^*_e-\delta$ and $C_f(\delta)=C^*_f$. Hence and from the construction of the new costs, we have
$v_f\notin V^A$ (see (\ref{newcb}))  and  $v_e\in V^A$ (see (\ref{newca})).  
By~(\ref{ep1}), we obtain $(v_e,v_f)\in E^A$. Thus $v_f\in V^A$, a contradiction.
Consider a tight inequality of the form:
\begin{equation}
\label{ep2}
c^*_e=c^*_f, \, e\notin Y,\, f\in P_{Y}(e).
\end{equation}
 On the contrary, suppose that $c_e(\delta)<c_f(\delta)$. This is only possible when 
 $c_e(\delta)=c^*_e-\delta$ and $c_f(\delta)=c^*_f$. Thus 
 we deduce that  $v_e\notin V^A$ and $v_f\in V^A$ (see (\ref{newc})).
  From~(\ref{ep2}), it follows that $(v_f,v_e)\in E^A$ and so $v_e\in V^A$,  a contradiction.
 \end{proof}
 We are now ready to give the precise value of $\delta$. We do this by 
  increasing the value of~$\delta$ until  some inequalities, originally not tight in~(\ref{toptc}),   become tight.
  Namely,
  let $\delta^*>0$ be the  smallest value of $\delta$ for which an inequality originally not tight becomes tight.
  Obviously, it occurs when $C^*_e-\delta^*=C^*_f$ for $e\notin X$, $f\in P_{X}(e)$ or $c^*_f-\delta^*=c^*_e$ for $f\notin Y$, $e\in P_{Y}(f)$.  By~(\ref{newc}),  $v_e\in V^A$ and $v_f\notin V^A$.
  Accordingly,
   if $\delta=\delta^*$, then at least one arc
   is added to $G^A$.
    Observe also that no arc can be removed from $G^A$ - the admissibility of the nodes remains unchanged.   
     It follows from the fact that each tight inequality
    for $v_e\in V^A$ and $v_f\in V^A$ is still tight. 
  This leads to the following lemma.
\begin{lem}
\label{linttheta}
	If $E_X\cap \{e\in E\,:\, v_e\in V^A\}  = \emptyset$, then $(X, Y)$ satisfies the sufficient  pair optimality conditions for each $\theta' \in [\theta, \theta+\delta^*]$.
\end{lem}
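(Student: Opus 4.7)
The plan is to exhibit explicit multipliers $\alpha'_e,\beta'_e$ witnessing the sufficient pair optimality conditions for an arbitrary $\theta' = \theta+\varepsilon$ with $\varepsilon \in [0,\delta^*]$. Starting from the multipliers $\alpha_e,\beta_e$ for which $(X,Y)$ already satisfies the conditions at $\theta$, I would set
\[
\alpha'_e = \alpha_e + \varepsilon,\quad \beta'_e=\beta_e \quad \text{if } v_e \in V^A,
\qquad
\alpha'_e = \alpha_e,\quad \beta'_e = \beta_e + \varepsilon \quad \text{if } v_e \notin V^A.
\]
By construction $\alpha'_e,\beta'_e\ge 0$ and $\alpha'_e+\beta'_e=\theta+\varepsilon=\theta'$ for every $e\in E$, so the multiplier constraints of Theorem~\ref{tssoc} are immediate. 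Observe further that $C_e-\alpha'_e$ and $c_e-\beta'_e$ coincide exactly with the shifted costs $C_e(\varepsilon)$ and $c_e(\varepsilon)$ defined in~(\ref{newc}).

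Next I would verify condition (ii) of Theorem~\ref{tssoc}. For $e\in E_X$ the hypothesis $E_X\cap\{e\in E : v_e\in V^A\}=\emptyset$ yields $v_e\notin V^A$, hence $\alpha'_e=\alpha_e=0$ because the original pair satisfies condition (ii) at $\theta$. For $e\in E_Y$, the construction of the admissible graph guarantees $v_e\in V^A$ (every such node is trivially reachable from itself), so $\beta'_e=\beta_e=0$, again by the original condition (ii). Thus (ii) is preserved for every $\varepsilon\in[0,\delta^*]$.

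It remains to check condition (i), namely that $X$ is a minimum spanning tree under $C_e(\varepsilon)$ and $Y$ under $c_e(\varepsilon)$. The natural tool is the path optimality conditions~(\ref{toptc}). By Lemma~\ref{lemdelta}, for every sufficiently small $\varepsilon>0$ the shifted costs satisfy~(\ref{toptcs}). Precisely by the definition of $\delta^*$ as the smallest $\delta>0$ at which a previously strict inequality in~(\ref{toptc}) becomes tight, all inequalities in~(\ref{toptcs}) continue to hold for every $\varepsilon\in[0,\delta^*]$: the originally tight inequalities are handled exactly as in the proof of Lemma~\ref{lemdelta} (using the admissibility-based case analysis), while the originally strict ones cannot be violated before $\varepsilon$ reaches $\delta^*$. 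Therefore $X$ and $Y$ are minimum spanning trees with respect to the corresponding costs, establishing condition (i).

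The only subtle point in this plan is the justification that the argument of Lemma~\ref{lemdelta} really extends from an arbitrarily small $\delta>0$ to the whole interval $[0,\delta^*]$; this is where the choice of $\delta^*$ as the smallest $\varepsilon$ triggering a new tight inequality does the work, so no new case analysis is needed beyond observing that the membership $v_e\in V^A$ versus $v_e\notin V^A$, which drives the sign pattern in~(\ref{newc}), remains unchanged on the interval. With that remark the three conditions of Theorem~\ref{tssoc} are verified simultaneously at every $\theta'\in[\theta,\theta+\delta^*]$, completing the lemma.
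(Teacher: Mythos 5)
Your proposal is correct and follows essentially the same route as the paper: the same multipliers $\alpha'_e,\beta'_e$ driven by membership of $v_e$ in $V^A$, the same verification that $\alpha'_e+\beta'_e=\theta'$ and that condition (ii) holds via $e\in E_X\Rightarrow v_e\notin V^A$ and $e\in E_Y\Rightarrow v_e\in V^A$, and the same appeal to Lemma~\ref{lemdelta} together with the definition of $\delta^*$ for condition (i). Your explicit remark that the tight-inequality argument of Lemma~\ref{lemdelta} extends to the whole interval $[0,\delta^*]$ is a slightly more careful statement of what the paper asserts tersely, but it is not a different argument.
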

\begin{proof}
Set $\theta'=\theta+\delta$, $\delta\in [0, \delta^*]$. Lemma~\ref{lemdelta} 
implies that $X$ is optimal for $C_e(\delta)$ and $Y$ is optimal for $c_e(\delta)$. 
From~(\ref{newc}) and the definition of the costs $C^*_e$ and $c^*_e$, it follows 
that $C_e(\delta)=C_e-\alpha'_e$ and $c_e(\delta)=c_e-\beta'_e$, where $\alpha_e'=\alpha_e+\delta$
 and $\beta'_e=\beta_e$ for each $v_e\in V^A$, $\alpha'_e=\alpha_e$ and $\beta'_e=\beta_e+\delta$ for
  each $v_e\notin V^A$.  Notice that $\alpha'_e+\beta'_e=\alpha_e+\beta_e+\delta=\theta+\delta=\theta'$ 
  for each $e\in E$. By~(\ref{newc}), $c_e(\delta)=c_e$ for each $e\in E_Y$ (recall that $e\in E_Y$ implies $v_e\in V^A$), and thus $\beta_e=0$ for each $e\in E_Y$. Since $E_X\cap \{e\in E\,:\, v_e\in V^A\}  = \emptyset$, $C_e(\delta)=C^*_e=C_e$  holds 
  for each $e\in E_X$,  and so $\alpha_e=0$ for each $e\in E_X$.
  We thus have shown that there exist $\alpha'_e,\beta'_e \geq 0$ such that
 $\alpha'_e+\beta'_e= \theta'$ for each $e\in E$ satisfying  the conditions~(i) and~(ii) in Theorem~\ref{tssoc},
 which completes the proof.
\end{proof}

We now describe a polynomial procedure
that,   for a given pair $(X,Y)$ satisfying  the sufficient  pair optimality conditions for some $\theta\geq 0$,
finds a new pair of spanning trees $(X',Y')$, 
 which also satisfies the  sufficient pair optimality conditions with $|E_Z'|=|E_Z|+1$.
We start by building the admissible graph $G^A=(V^A,E^A)$ for $(X,Y)$. If this graph contains an augmenting 
 path, 
 then by Lemma~\ref{tXVA}, we are done.
  Otherwise,
 we determine $\delta^*$ and modify the costs by using~(\ref{newc}). Lemma~\ref{linttheta} shows that
  $(X,Y)$ satisfies the sufficient  pair optimality conditions for $\theta+\delta^*$. 
  For $\delta^*$ some new arcs are  added to the admissible graph $G^A$
  (all the previous arcs must be still present in $G^A$). Thus $G^A$ is  updated and 
  we set $C^*_e:=C_e(\delta^*)$, $c^*_e:=c_e(\delta^*)$ for each $e\in E$, and $\theta:=\theta+\delta^*$.  
  We repeat this until there is an augmenting  path in $G^A=(V^A,E^A)$.
  Note that such a path must appear after at most $m=|E|$ iterations, which follows from the fact that at some step a node $v_e$ such that $e\in E_X$ must appear in $G^A$.
  \begin{figure}[h]
\centering
\includegraphics[width=10cm]{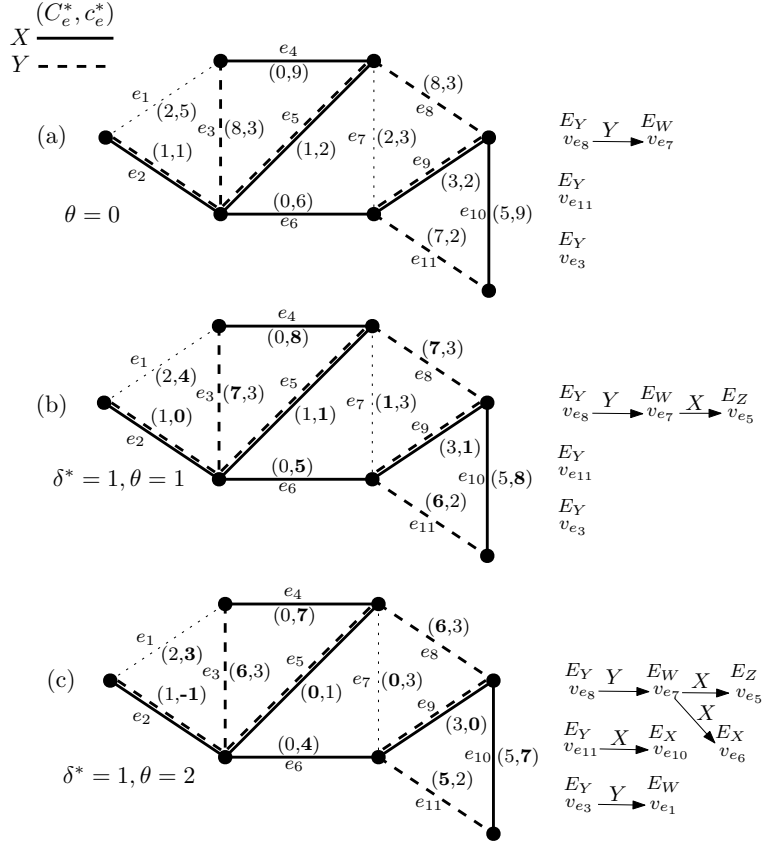}
\caption{Sample computations, $X=\{e_2, e_4,e_5,e_6,e_9,e_{10}\}$ and $Y=\{e_2, e_3, e_5, e_8, e_9, e_{11}\}$.} \label{figexiter}
\end{figure}

Sample computations are shown in Figure~\ref{figexiter}. We start with the pair $(X,Y)$, where $X=\{e_2, e_4,e_5,e_6,e_9,e_{10}\}$ and $Y=\{e_2, e_3, e_5, e_8, e_9, e_{11}\}$, which satisfies the sufficient pair optimality conditions for $\theta=0$ (see Figure~\ref{figexiter}a). Observe that in this case it is enough to check that $X$ is optimal for the costs $C^*_e=C_e$ and $Y$ is optimal for the costs $c^*_e=c_e$, $e\in E$. For $\theta=0$, the admissible graph does not contain any augmenting path. We thus have to modify the costs $C^*_e$ and $c^*_e$, according to~(\ref{newc}). For $\delta^*=1$, a new inequality becomes tight and one arc is added to the admissible graph (see Figure~\ref{figexiter}b). The admissible graph still does not have an augmenting path, so we have to again modify the costs. For $\delta^*=1$ some new inequalities become tight and three arcs are added to the admissible graph (see Figure~\ref{figexiter}c). Now the admissible graph has two augmenting paths (cases~\ref{auc1} and~\ref{auc3}a, see 
the proof of Lemma~\ref{tXVA}). Choosing one of them, and performing the modification described in 
the proof of Lemma~\ref{tXVA} we get a new pair $(X', Y')$ with $|E_{Z'}|=|E_Z|+1$.
  
  Let us now estimate the running time of the procedure. The admissible graph has at most $m$ nodes and at most $mn$ arcs. It can be built in $O(nm)$ time. The augmenting path in the admissible graph can be found in $O(nm)$ time by applying the breath first search. Also the number of inequalities which must be analyzed to find $\delta^*$ is $O(nm)$. Since we have to update the cost of each arc of the admissible graph at most $m$ times, until an augmenting 
  path appears, the required time of the procedure is $O(m^2n)$. We thus get the following result.
\begin{thm}
The \textsc{Rec ST}  problem is solvable in $O(Lm^2n)$ time, where $L=n-1-k$.
\label{trecpol}
\end{thm}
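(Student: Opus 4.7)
The plan is to package the iterative procedure already sketched in the paragraph preceding the theorem into a formal algorithm, then invoke Theorem~\ref{tloc} for correctness and count operations for the complexity bound.

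First, I would use Lemma~\ref{ltheta0} to obtain, in polynomial time, an initial pair $(X,Y)$ satisfying the sufficient pair optimality conditions for $\theta=0$. This pair is a feasible solution to the relaxation~(\ref{llpst_o}), but typically $|E_Z|<L$, so it is not yet feasible to~(\ref{mipst_o})--(\ref{mipst_8}).

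Next, I would formulate the main loop. While $|E_Z|<L$, build the admissible graph $G^A=(V^A,E^A)$ for the current pair and current multipliers $C^*_e,c^*_e$, and check whether $E_X\cap\{e:v_e\in V^A\}\neq\emptyset$. If yes, apply Lemma~\ref{tXVA} to produce a new pair $(X',Y')$ with $|E_{Z'}|=|E_Z|+1$ that still satisfies the sufficient optimality conditions at the same $\theta$. If no, compute the smallest $\delta^*>0$ as described before Lemma~\ref{linttheta}, update costs via~(\ref{newc}), set $\theta:=\theta+\delta^*$, and repeat; Lemma~\ref{linttheta} guarantees that the sufficient optimality conditions persist, while the definition of $\delta^*$ ensures that at least one new arc enters $G^A$ and, crucially, that no existing arc leaves it. Since $G^A$ has at most $mn$ arcs, after at most $O(m)$ such $\delta^*$-updates some node $v_e$ with $e\in E_X$ must become admissible, and the outer iteration completes via Lemma~\ref{tXVA}.

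When the loop terminates, $|E_Z|=L$, so the complementary slackness condition $\theta(|E_Z|-L)=0$ holds trivially; by Theorem~\ref{tloc} the current pair is optimal to~(\ref{mipst_o})--(\ref{mipst_8}) and hence to \textsc{Rec ST}.

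Finally, I would account for the running time. The admissible graph has $O(m)$ nodes and $O(nm)$ arcs; it can be constructed in $O(nm)$ time, and an augmenting path (or its absence) can be detected by breadth-first search in $O(nm)$ time. Determining $\delta^*$ amounts to scanning $O(nm)$ path-optimality inequalities, again $O(nm)$. Since at most $O(m)$ cost updates can occur before an augmenting path appears, each outer iteration costs $O(m^2n)$. There are at most $L$ outer iterations, so the total running time is $O(Lm^2n)$, as claimed. The only delicate bookkeeping point is the monotonicity of $G^A$ across successive $\delta^*$-steps, which is precisely what Lemma~\ref{linttheta} and the minimality of $\delta^*$ provide; everything else is a routine assembly of the lemmas above.
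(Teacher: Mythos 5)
Your proposal is correct and follows essentially the same route as the paper: initialize with Lemma~\ref{ltheta0}, alternate between augmenting via Lemma~\ref{tXVA} and raising $\theta$ by $\delta^*$ via Lemma~\ref{linttheta} until an augmenting path appears (at most $m$ cost updates per augmentation), conclude optimality from Theorem~\ref{tloc} once $|E_Z|=L$, and count $O(m^2n)$ per outer iteration over at most $L$ iterations. The complexity bookkeeping matches the paper's, so there is nothing to add.
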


\section{The recoverable robust spanning tree problem}
\label{sec2}

In this section we are concerned with the  \textsc{Rob Rec ST}  problem
under the interval uncertainty representation, i.e.  for the scenario sets
$\mathcal{U}^{I}$,
$\mathcal{U}^{I}_1(\Gamma)$, and $\mathcal{U}^{I}_2(\Gamma)$.
Using the polynomial algorithm for \textsc{Rec ST}, constructed in Section~\ref{sec1},
we will provide a polynomial  algorithm for \textsc{Rob Rec ST} under~$\mathcal{U}^{I}$ and
 some approximation algorithms for a wide class of
 \textsc{Rob Rec ST} under~$\mathcal{U}^{I}_1(\Gamma)$ and $\mathcal{U}^{I}_2(\Gamma)$. 
  The idea will be to solve \textsc{Rec ST} for a suitably chosen second stage costs. Let 
$$F(X)=\sum_{e\in X} C_e + \max_{S\in \mathcal{U}}\min_{Y\in \Phi^k_{X}} f(Y,S),$$
where
$f(Y,S)=\sum_{e\in Y} c^S_e$. It is worth pointing out that
under scenario sets  $\mathcal{U}^{I}$ and $\mathcal{U}^{I}_2(\Gamma)$,
the value of $F(X)$, for a given spanning tree $X$, can be computed in polynomial time~\cite{SAO09,NO13}. 
On the other hand, computing $F(X)$  under  $\mathcal{U}^{I}_1(\Gamma)$ 
turns out to be strongly NP-hard~\cite{NO13, FR99}.
Given scenario $S=(c_e^S)_{e\in E}$, consider the following \textsc{Rec ST} problem:
\begin{equation}
\label{incST1}
\min_{X\in \Phi} \left(\sum_{e\in X} C_e + \min_{Y\in \Phi^k_X} f(Y,S)\right).
\end{equation}
Problem~(\ref{incST1}) is equivalent to the formulation~(\ref{recst}) for $S=(c_e)_{e\in E}$ and it
 is polynomially solvable, according to the result obtained in  Section~\ref{sec1}.
  As in the previous section, we  denote by  pair $(X,Y)$
  a solution to~(\ref{incST1}), where $X\in \Phi$ and $Y\in \Phi^k_X$. Given $S$, we
  call  $(X,Y)$ an \emph{optimal pair under}~$S$  if $(X,Y)$ is an optimal solution to~(\ref{incST1}).
  
The
   \textsc{Rob Rec ST}  problem   with scenario set~$\mathcal{U}^{I}$ can be rewritten as follows:
\begin{align} 
\min_{X\in\Phi}
\left (\sum_{e\in X} C_e  +    \max_{S\in \mathcal{U}^{I}}\min_{Y\in \Phi^{k}_{X}}\sum_{e\in Y}c^{S}_e \right)=
\min_{X\in\Phi}
\left(\sum_{e\in X} C_e  +    \min_{Y\in \Phi^{k}_{X}} \sum_{e\in E}(c_e+d_e)  \right).
\label{irs}
\end{align} 
Thus  (\ref{irs}) is  (\ref{incST1}) for $S=(c_e+d_e)_{e\in E}\in \mathcal{U}^{I}$. Hence and
from  Theorem~\ref{trecpol} we immediately get the following theorem:
\begin{thm}
	For scenario set $\mathcal{U}^I$, the  \textsc{Rob Rec ST}  problem is solvable in 
	$O((n-1-k)m^2n)$ time.
\end{thm}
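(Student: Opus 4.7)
The plan is to justify the equality displayed in (\ref{irs}) and then invoke Theorem~\ref{trecpol} directly, since the right-hand side of (\ref{irs}) is exactly an instance of \textsc{Rec ST} with second-stage costs $c_e+d_e$.

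First I would fix an arbitrary $X\in\Phi$ and argue that
\[
\max_{S\in\mathcal{U}^I}\min_{Y\in\Phi^k_X}\sum_{e\in Y}c^S_e \;=\; \min_{Y\in\Phi^k_X}\sum_{e\in Y}(c_e+d_e).
\]
The ``$\leq$'' direction is immediate: for any scenario $S\in\mathcal{U}^I$ and any $Y\in\Phi^k_X$, we have $c^S_e\leq c_e+d_e$ for every $e\in E$, so $\sum_{e\in Y}c^S_e\leq \sum_{e\in Y}(c_e+d_e)$; taking the min over $Y$ on the right and then the max over $S$ on the left preserves the inequality. The ``$\geq$'' direction is even easier: the particular scenario $\widehat{S}=(c_e+d_e)_{e\in E}$ lies in $\mathcal{U}^I$, and for this scenario the inner minimum equals $\min_{Y\in\Phi^k_X}\sum_{e\in Y}(c_e+d_e)$, so the max over $S$ is at least this value. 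This establishes (\ref{irs}).

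Once (\ref{irs}) holds, the outer minimization over $X\in\Phi$ of
\[
\sum_{e\in X}C_e+\min_{Y\in\Phi^k_X}\sum_{e\in Y}(c_e+d_e)
\]
is precisely the \textsc{Rec ST} instance (\ref{incST1}) for the scenario $\widehat{S}=(c_e+d_e)_{e\in E}$, i.e.\ the problem (\ref{recst}) with second-stage costs $c_e+d_e$. Theorem~\ref{trecpol} solves any such instance in $O(Lm^2n)$ time with $L=n-1-k$, which yields the claimed $O((n-1-k)m^2n)$ bound.

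There is no real obstacle here; the only subtlety is the justification that the adversary's worst scenario is the ``all upper bounds'' scenario, and this is a one-line monotonicity argument valid uniformly in $X$ and $Y$ (which is what lets us pull the inner max outside of the structure of the recoverable problem). The rest is bookkeeping: substitute $\widehat{S}$ into (\ref{incST1}) and quote Theorem~\ref{trecpol}.
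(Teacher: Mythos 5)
Your proposal is correct and follows essentially the same route as the paper: the paper also reduces \textsc{Rob Rec ST} under $\mathcal{U}^I$ to the \textsc{Rec ST} instance with second-stage costs $c_e+d_e$ via the identity~(\ref{irs}) and then invokes Theorem~\ref{trecpol}. The only difference is that you spell out the monotonicity argument showing the worst-case scenario is $(c_e+d_e)_{e\in E}$, which the paper treats as immediate.
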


We now address  \textsc{Rob Rec ST} under~$\mathcal{U}^{I}_1(\Gamma)$ and $\mathcal{U}^{I}_2(\Gamma)$.
Suppose that
 $c_e\geq \alpha(c_e+d_e)$ for each $e\in E$, where $\alpha\in (0,1]$ is a given constant.  
This inequality means that for each edge  $e\in E$ the nominal cost~$c_e$ is positive and $c_e+d_e$  is at most $1/\alpha$ greater than $c_e$. It is reasonable to assume that this condition will be true in many practical applications for not very large value of $1/\alpha$.
\begin{lem}
\label{lemappr1}
         Suppose that
        $c_e\geq \alpha(c_e+d_e)$ for each $e\in E$,  where $\alpha\in (0,1]$, and let
	$(\hat{X}, \hat{Y})$ be an optimal pair under $\underline{S}=(c_e)_{e\in E}$. 
	Then for the scenario sets $\mathcal{U}^{I}_1(\Gamma)$ and $\mathcal{U}^{I}_2(\Gamma)$
	the inequality
	$F(\hat{X})\leq \frac{1}{\alpha} F(X)$ holds for any $X\in \Phi$.
\end{lem}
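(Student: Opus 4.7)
The plan is to sandwich $F(\hat{X})$ between an upper bound based on the nominal scenario $\underline{S}=(c_e)_{e\in E}$ and a corresponding lower bound for $F(X)$, exploiting the optimality of $(\hat{X},\hat{Y})$ under $\underline{S}$ and the pointwise inequality $c_e+d_e\le c_e/\alpha$.

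First I would observe that $\underline{S}\in \mathcal{U}^{I}_1(\Gamma)\cap \mathcal{U}^{I}_2(\Gamma)$, since taking all $\delta_e=0$ in the definitions (\ref{intset1}) and (\ref{intset2}) produces exactly the nominal cost vector. Hence for every spanning tree $X\in\Phi$, the maximum over $\mathcal{U}$ in the definition of $F(X)$ is at least the value attained at $\underline{S}$, which gives the lower bound
\[
F(X)\;\ge\;\sum_{e\in X}C_e+\min_{Y\in\Phi^{k}_{X}}\sum_{e\in Y}c_e.
\]

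Next I would upper-bound $F(\hat{X})$. For any scenario $S\in\mathcal{U}^{I}_1(\Gamma)\cup\mathcal{U}^{I}_2(\Gamma)$ and every edge $e$, we have $c^{S}_e\le c_e+d_e\le c_e/\alpha$, using the hypothesis $c_e\ge\alpha(c_e+d_e)$. Let $Y^{*}\in\Phi^{k}_{\hat{X}}$ minimize $\sum_{e\in Y}c_e$ over $\Phi^{k}_{\hat{X}}$. For any $S$ in the scenario set,
\[
\min_{Y\in\Phi^{k}_{\hat{X}}}\sum_{e\in Y}c^{S}_e\;\le\;\sum_{e\in Y^{*}}c^{S}_e\;\le\;\frac{1}{\alpha}\sum_{e\in Y^{*}}c_e\;=\;\frac{1}{\alpha}\min_{Y\in\Phi^{k}_{\hat{X}}}\sum_{e\in Y}c_e,
\]
and taking the maximum over $S$ preserves this bound. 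Combining with $\sum_{e\in\hat{X}}C_e\le\frac{1}{\alpha}\sum_{e\in\hat{X}}C_e$ (since $\alpha\in(0,1]$), we get
\[
F(\hat{X})\;\le\;\frac{1}{\alpha}\!\left(\sum_{e\in\hat{X}}C_e+\min_{Y\in\Phi^{k}_{\hat{X}}}\sum_{e\in Y}c_e\right).
\]

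Finally, since $(\hat{X},\hat{Y})$ is optimal for (\ref{incST1}) under $\underline{S}$, the quantity in parentheses equals $\min_{X'\in\Phi}\bigl(\sum_{e\in X'}C_e+\min_{Y\in\Phi^{k}_{X'}}\sum_{e\in Y}c_e\bigr)$, which is no larger than $\sum_{e\in X}C_e+\min_{Y\in\Phi^{k}_{X}}\sum_{e\in Y}c_e$ for any fixed $X\in\Phi$. Applying the lower bound from the first step, this latter expression is at most $F(X)$, and chaining the inequalities yields $F(\hat{X})\le\frac{1}{\alpha}F(X)$, as required. No step is really an obstacle; the only subtlety is making sure the argument that upper-bounds the inner $\min_{Y}$ under an arbitrary scenario is done by fixing a single $Y^{*}$ (the nominal-optimal recovery) rather than interchanging $\min$ and $\max$, and that $\underline{S}$ indeed lies in both scenario sets so that the lower bound on $F(X)$ is valid.
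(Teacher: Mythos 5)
Your proof is correct and follows essentially the same route as the paper: both arguments rest on the nominal scenario $\underline{S}$ giving a lower bound on $F(X)$, the optimality of $(\hat{X},\hat{Y})$ for the recoverable problem under $\underline{S}$, and the scaling $c_e+d_e\le c_e/\alpha$ applied to the recovery tree together with a max--min comparison. The only difference is presentational: you sandwich $F(\hat{X})$ and $F(X)$ around the nominal \textsc{Rec ST} optimum, whereas the paper writes the same chain of inequalities starting from $F(X)$ and ending at $\alpha F(\hat{X})$.
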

\begin{proof}
We give the proof only for 
 the scenario set $\mathcal{U}^{I}_1(\Gamma)$.
 The proof for $\mathcal{U}^{I}_2(\Gamma)$ is the same. Let $X\in \Phi$.
  The following inequality is satisfied:
$$F(X)=\sum_{e\in X} C_e+\max_{S\in \mathcal{U}_1^I(\Gamma)}\min_{Y\in \Phi^k_X}f(Y,S)=\sum_{e\in X} C_e+f(Y^*,S^*)\geq \sum_{e\in X} C_e+ f(Y^*,\underline{S}).$$
Clearly, $(X,Y^*)$ is a feasible pair to~(\ref{incST1}) under $\underline{S}$.
 From the definition of $(\hat{X},\hat{Y})$ we get
 \begin{equation}
 F(X)\geq \sum_{e\in \hat{X}} C_e+ f(\hat{Y},\underline{S})=\sum_{e\in \hat{X}} C_e + \sum_{e\in \hat{Y}} c_e \geq \sum_{e\in \hat{X}} C_e + \sum_{e\in \hat{Y}} \alpha(c_e+d_e)= \sum_{e\in \hat{X}} C_e+ \alpha f(\hat{Y},\overline{S}),
 \label{pappro}
 \end{equation}
 where $\overline{S}=(c_e+d_e)_{e\in E}$. Hence
 \begin{align*}
 F(X)&\geq \sum_{e\in \hat{X}} C_e +\alpha \max_{S \in \mathcal{U}_1^I(\Gamma)}f(\hat{Y},S) 
 \geq \sum_{e\in \hat{X}} C_e +\alpha \max_{S \in \mathcal{U}_1^I(\Gamma)}\min_{Y\in \Phi^k_{\hat{X}}}f(Y, S) \\
&\geq \alpha \left(\sum_{e\in \hat{X}} C_e + \max_{S \in \mathcal{U}_1^I(\Gamma)}\min_{Y\in \Phi^k_{\hat{X}}}f(Y, S)\right)
 =\alpha F(\hat{X})
 \end{align*}
 and the lemma follows.
\end{proof}
The condition  $c_e\geq \alpha(c_e+d_e)$, $e\in E$,
in Lemma~\ref{lemappr1},
 can be weakened and, in consequence,
the set of instances to which the approximation ratio of the algorithm applies can be extended. Indeed,  from inequality~(\ref{pappro})
it follows that the bounds of the uncertainty intervals are only required to meet 
the condition $\sum_{e\in \hat{Y}} c_e \geq  \alpha\sum_{e\in \hat{Y}}(c_e+d_e)$. This condition can be verified efficiently, since $\hat{Y}$ can be computed in polynomial time.

We now focus on  
 \textsc{Rob Rec ST} for~$\mathcal{U}^{I}_2(\Gamma)$.
Define $D=\sum_{e\in E} d_e$ and suppose that $D>0$ (if $D=0$, then the problem is equivalent to \textsc{Rec ST} for the second stage costs $c_e$, $e\in E$). 
Consider scenario $S'$ under which $c_e^{S'}=\min\{c_e+d_e, c_e+\Gamma\frac{d_e}{D}\}$ for each $e\in E$. 
Obviously,
$S'\in \mathcal{U}^{I}_2(\Gamma)$, since 
$\sum_{e\in E} \delta_e \leq \sum_{e\in E} \Gamma\frac{d_e}{D}\leq \Gamma$.  
The following theorem provides another approximation result for
 \textsc{Rob Rec ST} with scenario set~$\mathcal{U}^{I}_2(\Gamma)$:
\begin{lem}
\label{lemappr2}
	Let $(\hat{X}, \hat{Y})$ be an optimal pair under~$S'$. Then the following implications are true for scenario set $\mathcal{U}^{I}_2(\Gamma)$:
	\begin{itemize}
 \item[(i)]
   If $\Gamma \geq \beta D$, $\beta\in (0,1]$, then  $F(\hat{X})\leq \frac{1}{\beta}F(X)$ for any $X\in \Phi$.
 \item[(ii)] If $\Gamma \leq \gamma F(\hat{X})$, $\gamma\in [0,1)$ then $F(\hat{X})\leq \frac{1}{1-\gamma}F(X)$ for any $X\in \Phi$.
	\end{itemize}
\end{lem}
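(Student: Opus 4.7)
The plan is to parallel the structure of Lemma~\ref{lemappr1}: for an arbitrary $X\in\Phi$, derive a lower bound on $F(X)$ by evaluating the $\max_{S}$ at the single scenario $S'$, then compare against an upper bound on $F(\hat X)$ obtained by fixing the second-stage tree to $\hat Y$. Concretely, since $S'\in \mathcal{U}^I_2(\Gamma)$,
\[
F(X) \;\ge\; \sum_{e\in X}C_e + \min_{Y\in\Phi^k_X} f(Y,S') \;\ge\; \sum_{e\in\hat X}C_e + f(\hat Y,S'),
\]
where the second inequality is the optimality of $(\hat X,\hat Y)$ under $S'$ (both sides are instances of problem~(\ref{incST1})). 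Symmetrically, choosing $Y=\hat Y\in\Phi^k_{\hat X}$ gives $F(\hat X)\le \sum_{e\in\hat X}C_e + \max_{S\in\mathcal{U}^I_2(\Gamma)} f(\hat Y,S)$. The two statements then differ only in how one bounds $\max_{S\in\mathcal{U}^I_2(\Gamma)} f(\hat Y,S)$ from above in terms of $f(\hat Y,S')$.

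For part (i), the plan is to show $f(\hat Y,S')\ge \beta f(\hat Y,\overline S)$, where $\overline S=(c_e+d_e)_{e\in E}$. Split into the two natural cases according to whether $\Gamma/D\ge 1$ (then $c_e^{S'}=c_e+d_e$ so $f(\hat Y,S')=f(\hat Y,\overline S)$) or $\Gamma/D<1$ (then $c_e^{S'}=c_e+(\Gamma/D)d_e$, and the bound follows from $\Gamma/D\ge \beta$ and $\beta\le 1$, using nonnegativity of nominal costs). Since $\max_{S\in\mathcal{U}^I_2(\Gamma)} f(\hat Y,S)\le f(\hat Y,\overline S)$ trivially (each $\delta_e\le d_e$), we get $F(\hat X)\le \sum_{e\in\hat X}C_e + \tfrac{1}{\beta}f(\hat Y,S')$, which is at most $\tfrac{1}{\beta}\bigl(\sum_{e\in\hat X}C_e+f(\hat Y,S')\bigr)$ using $\beta\le 1$ and $C_e\ge 0$; combined with the lower bound on $F(X)$ this yields the claim.

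For part (ii), the plan is to use the budget constraint directly. Any $S\in\mathcal{U}^I_2(\Gamma)$ has $f(\hat Y,S)=\sum_{e\in\hat Y}(c_e+\delta_e)\le f(\hat Y,\underline S)+\Gamma$, because $\sum_{e\in E}\delta_e\le \Gamma$. Since $S'\ge\underline S$ componentwise, this gives $\max_{S\in\mathcal{U}^I_2(\Gamma)} f(\hat Y,S)\le f(\hat Y,S')+\Gamma$. Therefore $F(\hat X)\le \sum_{e\in\hat X}C_e+f(\hat Y,S')+\Gamma \le F(X)+\Gamma$. Substituting the hypothesis $\Gamma\le \gamma F(\hat X)$ and rearranging delivers $(1-\gamma)F(\hat X)\le F(X)$.

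The main obstacle is keeping track of how the $\tfrac{1}{\beta}$ factor is absorbed cleanly in (i): once the comparison $f(\hat Y,S')\ge \beta f(\hat Y,\overline S)$ is in hand, we must still promote the $\tfrac{1}{\beta}$ from the $f$-term to the entire sum $\sum_{e\in\hat X}C_e+f(\hat Y,S')$, which quietly relies on $C_e\ge 0$ (a standing assumption here). Part (ii) is essentially algebraic once the observation $\max_S f(\hat Y,S)-f(\hat Y,\underline S)\le \Gamma$ is made explicit.
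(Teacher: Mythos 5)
Your proof is correct and follows essentially the same route as the paper: lower-bound $F(X)$ by evaluating at the single scenario $S'$ and invoking optimality of $(\hat X,\hat Y)$, then bound $\max_{S\in\mathcal{U}^I_2(\Gamma)}f(\hat Y,S)$ by $\tfrac{1}{\beta}f(\hat Y,S')$ in (i) (the paper does this via $\min\{c_e+d_e,c_e+\Gamma d_e/D\}\geq c_e+\beta d_e\geq\beta(c_e+d_e)$ rather than your case split, but it is the same estimate) and by $f(\hat Y,S')+\Gamma$ in (ii). Your explicit remark that absorbing the factor $\tfrac{1}{\beta}$ into the whole sum uses $C_e\geq 0$ matches the (implicit) use of nonnegative first-stage costs in the paper's final step, so there is no gap.
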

\begin{proof}
Let $X\in \Phi$. Since $S'\in \mathcal{U}^{I}_2(\Gamma)$, we get
 \begin{equation}
 \label{ee0}
 F(X)= \sum_{e\in X} C_e+\max_{S\in \mathcal{U}_2^I(\Gamma)}\min_{Y\in \Phi^k_X}f(Y,S)\geq \sum_{e\in X} C_e+\min_{Y\in \Phi^k_{X}}f(Y,S').
 \end{equation}
 We first prove implication~$(i)$. By~(\ref{ee0}) and  the definition of $(\hat{X},\hat{Y})$, we obtain
 \begin{align*}
 F(X)&\geq \sum_{e\in \hat{X}} C_e+f(\hat{Y},S') =\sum_{e\in \hat{X}} C_e+ \sum_{e\in \hat{Y}} \min\{c_e+d_e, c_e+\Gamma \frac{d_e}{D}\}\\
&\geq\sum_{e\in \hat{X}} C_e+ \sum_{e\in \hat{Y}} \min\{c_e+d_e, c_e+\beta d_e\}=
 	\sum_{e\in \hat{X}} C_e+ \sum_{e\in \hat{Y}}  (c_e+\beta d_e)\geq \sum_{e\in \hat{X}} C_e+ \beta f(\hat{Y},\overline{S}),
 \end{align*}
 where $\overline{S}=(c_e+d_e)_{e\in E}$. The rest of the proof is the same as in the proof of Lemma~\ref{lemappr1}.
  We now prove implication~$(ii)$. By~(\ref{ee0}) and the definition of $(\hat{X},\hat{Y})$, we have
\begin{align*}
  F(X)&\geq \sum_{e\in \hat{X}} C_e+f(\hat{Y},S') \geq \sum_{e\in \hat{X}} C_e+f(\hat{Y},\underline{S})\geq
 	\sum_{e\in \hat{X}} C_e+ \max_{S\in \mathcal{U}_2^I(\Gamma)}f(\hat{Y},S)-\Gamma\\
&\geq\sum_{e\in \hat{X}} C_e+ \max_{S\in \mathcal{U}_2^I(\Gamma)} \min_{Y\in \Phi^k_{\hat{X}}}f(Y,S)-\Gamma
=F(\hat{X})-\Gamma.
\end{align*}
If $\Gamma \leq \gamma F(\hat{X})$. Then
$F(X)\geq F(\hat{X})-\gamma F(\hat{X})=(1-\gamma)F(\hat{X})$
and $F(\hat{X})\leq \frac{1}{1-\gamma} F(X).$
\end{proof}
Note that the value of $F(\hat{X})$ under $\mathcal{U}^{I}_2(\Gamma)$ can be computed in 
polynomial time~\cite{NO13}.
In consequence, the constants $\beta$ and $\gamma$ can be efficiently determined for every particular instance of the problem.
Clearly, we can assume that $d_e\leq \Gamma$ for each $e\in E$, which implies  $D\leq m\Gamma$, where $m=|E|$.  Hence, we can assume that $\Gamma\geq \frac{1}{m} D$ for every instance of the problem.  We thus get from Lemma~\ref{lemappr2} (implication (i)) that $F(\hat{X})\leq m F(X)$ for any $X\in \Phi$ and the problem is approximable within $m$.
If $\alpha$, $\beta$ and $\gamma$ are the constants from Lemmas~\ref{lemappr1} and~\ref{lemappr2}, then 
 the following theorem summarizes the approximation results:
\begin{thm}
	\textsc{Rob Rec ST}  is approximable within $\frac{1}{\alpha}$ 
	under scenario set $\mathcal{U}^{I}_1(\Gamma)$  
	and  it is approximable within $\min\{\frac{1}{\beta},\frac{1}{\alpha},\frac{1}{1-\gamma}\}$
	under scenario set $\mathcal{U}^{I}_2(\Gamma)$.
\end{thm}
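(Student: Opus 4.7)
The plan is to directly combine the strongly polynomial algorithm for \textsc{Rec ST} (Theorem~\ref{trecpol}) with the approximation bounds in Lemma~\ref{lemappr1} and Lemma~\ref{lemappr2}. The approximation algorithm is essentially: solve a single deterministic \textsc{Rec ST} instance for a carefully chosen second stage cost scenario, return its first stage tree $\hat{X}$, and invoke the appropriate lemma to bound $F(\hat{X})/F(X^*)$, where $X^*$ is an optimal first stage tree for \textsc{Rob Rec ST}.

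For the scenario set $\mathcal{U}_1^I(\Gamma)$, I would set $S=\underline{S}=(c_e)_{e\in E}$, compute an optimal pair $(\hat{X},\hat{Y})$ under $\underline{S}$ in strongly polynomial time using Theorem~\ref{trecpol}, and then apply Lemma~\ref{lemappr1}, which asserts $F(\hat{X})\leq \frac{1}{\alpha}F(X)$ for every $X\in\Phi$. Taking $X=X^*$ yields $F(\hat{X})\leq \frac{1}{\alpha}F(X^*)$, so $\hat{X}$ is a $\frac{1}{\alpha}$-approximate first stage tree.

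For the scenario set $\mathcal{U}_2^I(\Gamma)$, I would run three candidate algorithms in parallel and return the one of smallest worst-case cost. The first is exactly the procedure above applied to $\underline{S}$, whose output satisfies $F(\hat{X})\leq \frac{1}{\alpha}F(X^*)$ by Lemma~\ref{lemappr1} (which is valid for $\mathcal{U}_2^I(\Gamma)$ as well). The second and third candidates are the optimal pair $(\hat{X},\hat{Y})$ under $S'$, with $c_e^{S'}=\min\{c_e+d_e,\,c_e+\Gamma d_e/D\}$, computed again via Theorem~\ref{trecpol}; by Lemma~\ref{lemappr2}(i) and~(ii) this tree satisfies $F(\hat{X})\leq \frac{1}{\beta}F(X^*)$ and $F(\hat{X})\leq \frac{1}{1-\gamma}F(X^*)$ whenever the respective hypotheses $\Gamma\geq \beta D$ and $\Gamma\leq \gamma F(\hat{X})$ hold. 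Since $F(\hat{X})$ under $\mathcal{U}_2^I(\Gamma)$ can be evaluated in polynomial time (cf.~\cite{NO13}), the constants $\beta$ and $\gamma$ certifying the hypotheses are computable \emph{a posteriori} for each instance, and the final guarantee is the minimum of the three bounds.

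There is no real obstacle here, since the heavy lifting has already been done: the algorithmic content is in Theorem~\ref{trecpol}, and the analytic content is in Lemmas~\ref{lemappr1} and~\ref{lemappr2}. The only mild subtlety I would make explicit is that Lemma~\ref{lemappr1} applies verbatim to both $\mathcal{U}_1^I(\Gamma)$ and $\mathcal{U}_2^I(\Gamma)$, which is why $\frac{1}{\alpha}$ appears inside the minimum in the second bullet; and that in the $\mathcal{U}_2^I(\Gamma)$ case one must solve \textsc{Rec ST} under \emph{both} $\underline{S}$ and $S'$ and take the better tree, so that the guarantee is the pointwise minimum of $\frac{1}{\alpha}$, $\frac{1}{\beta}$ and $\frac{1}{1-\gamma}$. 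Polynomiality of the whole scheme is immediate from Theorem~\ref{trecpol}.
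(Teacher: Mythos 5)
Your proposal is correct and follows essentially the same route as the paper, which treats this theorem as a direct corollary of Theorem~\ref{trecpol} together with Lemmas~\ref{lemappr1} and~\ref{lemappr2}: solve \textsc{Rec ST} under $\underline{S}$ (and, for $\mathcal{U}^{I}_2(\Gamma)$, also under $S'$) and invoke the lemmas for the stated ratios. Your explicit remark about running both candidates for $\mathcal{U}^{I}_2(\Gamma)$ and keeping the better one just makes precise what the paper leaves implicit.
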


 Observe that Lemma~\ref{lemappr1} and Lemma~\ref{lemappr2}  hold of any sets $\Phi$ and $\Phi^k_X$ (the particular structure of these sets is not exploited). Hence the approximation algorithms can be applied to any problem for which the recoverable version~(\ref{incST1}) is polynomially solvable.

\section{Conclusions}

In this paper we have studied the recoverable robust spanning tree problem (\textsc{Rob Rec ST}) 
under various  interval uncertainty representations. The main result is the polynomial time combinatorial algorithm for the recoverable spanning tree.
We have applied   this  algorithm 
for solving   \textsc{Rob Rec ST} under the traditional uncertainty representation (see, e.g.,~\cite{KY97})
 in polynomial time. 
Moreover, we have used the algorithm  for providing several approximation results for \textsc{Rec ST}  with the scenario set introduced by Bertsimas and Sim~\cite{BS03}  and
the scenario set with a budged constraint (see, e.g,.~\cite{NO13}).
There is a number of open questions concerning the considered problem. Perhaps, the most interesting one is to resolve the complexity of the robust problem under the interval uncertainty representation with budget constraint. It is possible that this problem may be solved in polynomial time by some extension of the algorithm constructed in this paper. One can also try to extend the algorithm for the more general recoverable matroid base problem, which has also been shown to be polynomially solvable in~\cite{HKZ16}.

%***********Acknowledgements***********************************
\subsubsection*{Acknowledgements}
The second and the third authors were
supported by
 the National Center for Science (Narodowe Centrum Nauki), grant  2013/09/B/ST6/01525.

%******************References*****************************************
%\bibliographystyle{abbrv}
%\bibliography{robust}

\end{document}